\documentclass[12pt]{article}

\usepackage[margin=1in]{geometry}
\usepackage{times}
\usepackage{graphicx}
\usepackage{booktabs}
\usepackage{amsmath}
\usepackage{amssymb}
\usepackage{amsthm}
\usepackage{natbib}
\usepackage{setspace}
\usepackage{caption}
\usepackage{subcaption}
\usepackage{array}
\usepackage{xcolor}
\usepackage{hyperref}

\hypersetup{
    colorlinks,
    linkcolor={blue!60!black},
    citecolor={blue!60!black},
    urlcolor={blue!60!black}
}

\newtheorem{proposition}{Proposition}
\newtheorem{theorem}{Theorem}

\begin{document}
\doublespacing
\setlength{\emergencystretch}{2em}
\newcommand{\AltMarginAggregateWidthMax}{0.0212}
\newcommand{\AltMarginMedianWidthMax}{0.0037}
\newcommand{\AltMarginMedianWidthMin}{0.0031}
\newcommand{\BindingCellEndpointCount}{42}
\newcommand{\BindingTopCellMaxShadow}{0.0691}
\newcommand{\BindingTopCellOD}{LA-LA}
\newcommand{\BindingTopCellSector}{Manufacturing}
\newcommand{\BindingTopCellTargetPct}{1.7}
\newcommand{\CIFrontierDistanceTV}{0.578}
\newcommand{\CIHomeFrontierMean}{0.478}
\newcommand{\CfsGoodsLogisticsWidthSharePct}{38.7}
\newcommand{\DualMaxGap}{0}
\newcommand{\DualMaxMomentShadow}{0.0180}
\newcommand{\FireWidthSharePct}{26.5}
\newcommand{\GravityFrontierDistanceTV}{0.123}
\newcommand{\GravityHomeFrontierMean}{0.122}
\newcommand{\GulfArkansasCIRank}{36}
\newcommand{\GulfArkansasGravityRank}{3}
\newcommand{\GulfCIPointsBelowBounds}{1}
\newcommand{\GulfCaliforniaCIRank}{3}
\newcommand{\GulfCaliforniaGravityRank}{10}
\newcommand{\GulfGravityPointsOutsideBounds}{0}
\newcommand{\GulfLossRatio}{0.860}
\newcommand{\GulfLouisianaCIPoint}{7.5e-05}
\newcommand{\GulfLouisianaCIRank}{1}
\newcommand{\GulfLouisianaGravityPoint}{0.0024}
\newcommand{\GulfLouisianaGravityRank}{1}
\newcommand{\GulfLouisianaLower}{0.0009}
\newcommand{\GulfLouisianaUpper}{0.0043}
\newcommand{\GulfMississippiCIRank}{2}
\newcommand{\GulfMississippiGravityRank}{2}
\newcommand{\GulfNewYorkCIRank}{5}
\newcommand{\GulfNewYorkGravityRank}{20}
\newcommand{\GulfRegionalTV}{0.137}
\newcommand{\GulfTopOverlap}{0.598}
\newcommand{\HighIndustryDomarError}{0}
\newcommand{\HighIndustryRegionalTV}{0}
\newcommand{\HighRegionSectorRankCorr}{0.569}
\newcommand{\HighRegionSectorTV}{0.419}
\newcommand{\HighRegionSectorTopOverlap}{0.334}
\newcommand{\HighRegionalTV}{0.352}
\newcommand{\HighRegionalTopOverlap}{0.495}
\newcommand{\HomeBiasCIMean}{0.028}
\newcommand{\HomeBiasGravityMean}{0.384}
\newcommand{\HomeBiasObservedMean}{0.506}
\newcommand{\InfoBandedAggregateLower}{0.0066}
\newcommand{\InfoBandedAggregateUpper}{0.0085}
\newcommand{\InfoBandedAggregateWidth}{0.0019}
\newcommand{\InfoBandedMedianWidth}{0.0036}
\newcommand{\InfoBilateralMedianWidthChange}{0.0004}
\newcommand{\InfoBilateralMedianWidthReductionPct}{11.9}
\newcommand{\InfoFinalAggregateInterval}{[0.0066, 0.0066]}
\newcommand{\InfoFinalAggregateLower}{0.0066}
\newcommand{\InfoFinalAggregateUpper}{0.0066}
\newcommand{\InfoFinalAggregateWidth}{0}
\newcommand{\InfoFinalMedianWidth}{0.0027}
\newcommand{\InfoFinalPninetyWidth}{0.0034}
\newcommand{\InfoFinalPossibleTop}{51}
\newcommand{\InfoFinalRobustTop}{0}
\newcommand{\InfoMarginsMedianWidth}{0.0047}
\newcommand{\InfoPairwiseDeterminedPct}{0.6}
\newcommand{\InfoPairwiseDeterminedShare}{0.006}
\newcommand{\InfoPairwiseTopTwentyPct}{1.0}
\newcommand{\InfoPairwiseTopTwentyShare}{0.010}
\newcommand{\LPValidationBandViolations}{0}
\newcommand{\LPValidationMaxColGap}{1.0e-07}
\newcommand{\LPValidationMaxRowGap}{1.0e-07}
\newcommand{\LPValidationMaxSupportLeak}{0}
\newcommand{\LocalMeanIntervalWidth}{0.714}
\newcommand{\LocalSupportWidthSharePct}{7.8}
\newcommand{\ManufacturingCIHomeShare}{0.026}
\newcommand{\ManufacturingObservedHomeShare}{0.371}
\newcommand{\ManufacturingSplitBandedAggregateWidth}{0.0023}
\newcommand{\ManufacturingSplitBandedMedianWidth}{0.0054}
\newcommand{\ManufacturingSplitDetailedCfsCoveragePct}{88.5}
\newcommand{\ManufacturingSplitManufacturingWidthSharePct}{41.2}
\newcommand{\ManufacturingSplitMaxMarginBandViolation}{0}
\newcommand{\ManufacturingSplitMaxSupportLeak}{0}
\newcommand{\ManufacturingSplitMaxTopFifteenCfsPct}{44.7}
\newcommand{\ManufacturingSplitMedianWidth}{0.0043}
\newcommand{\ManufacturingSplitMinCfsPairs}{1577}
\newcommand{\ManufacturingSplitPairwiseDeterminedPct}{0.0}
\newcommand{\ManufacturingSplitPninetyWidth}{0.0059}
\newcommand{\ManufacturingSplitTopWidthGroup}{services and other pooled sectors}
\newcommand{\ManufacturingSplitTopWidthGroupPct}{49.5}
\newcommand{\ManufacturingWidthSharePct}{21.1}
\newcommand{\MaxEmpiricalSpectralRadius}{0.4483}
\newcommand{\MaxMomentDistanceTau}{0.141}
\newcommand{\MaxMomentHomeTau}{0.141}
\newcommand{\MeanMomentDistanceTau}{0.035}
\newcommand{\MeanMomentHomeTau}{0.035}
\newcommand{\MiningExcludedMedianWidth}{0.0031}
\newcommand{\MiningExcludedPninetyWidth}{0.0042}
\newcommand{\MomentBandFormalConfidenceCount}{0}
\newcommand{\MomentBandRobustMedianWidth}{0.0031}
\newcommand{\MomentBandRobustPninetyWidth}{0.0042}
\newcommand{\NonlinearContainsPointCompletions}{1}
\newcommand{\NonlinearOuterWidth}{36.079}
\newcommand{\NonlinearPerturbationRadius}{2.947}
\newcommand{\NonlinearRemainderBound}{17.958}
\newcommand{\PbsWidthSharePct}{19.4}
\newcommand{\ServicesOtherWidthSharePct}{53.5}
\newcommand{\SharpGulfMedianWidth}{0.0027}
\newcommand{\SharpLocalPninetyWidth}{0.0001}
\newcommand{\SharpShipmentMedianWidth}{0.0006}
\newcommand{\WholesaleCIHomeShare}{0.028}
\newcommand{\WholesaleObservedHomeShare}{0.606}
\newcommand{\WidthTopGroup}{services and other pooled sectors}
\newcommand{\WidthTopGroupPct}{53.5}
\newcommand{\WidthTopGroupShare}{0.535}

\renewcommand{\thefootnote}{\fnsymbol{footnote}}

\begin{center}
{\Large \textbf{Partial Identification of Spatial Production Networks}\textsuperscript{*}\par}
\vspace{1.5em}

\begin{tabular}{>{\centering\arraybackslash}m{0.3\textwidth} >{\centering\arraybackslash}m{0.3\textwidth} >{\centering\arraybackslash}m{0.3\textwidth}}
\textbf{Shaowen Luo} & \textbf{Kwok Ping Tsang} & \textbf{Zichao Yang} \\
\end{tabular}

\vspace{0.75em}
{\today\par}
\end{center}

\footnotetext[1]{Luo: Department of Economics, Virginia Tech, \href{mailto:sluo@vt.edu}{sluo@vt.edu}. Tsang: Department of Economics, Virginia Tech, \href{mailto:byront@vt.edu}{byront@vt.edu}. Yang: Wenlan School of Business, Zhongnan University of Economics and Law, \href{mailto:yang\_zichao@outlook.com}{yang\_zichao@outlook.com}.}
\setcounter{footnote}{0}
\renewcommand{\thefootnote}{\arabic{footnote}}

\begingroup
\setstretch{1.35}
\begin{abstract}
Which regional exposure conclusions are identified when public data do not observe buyer-seller links across states? We study this question by treating the missing intermediate-input spatial kernel as an unknown coupling constrained by regional activity margins, support restrictions, and auxiliary shipment moments. For linear exposure statistics, the sharp identified set is computed by transportation linear programs. Applying the method to U.S. state-sector data, we find that shipment data are inconsistent with the spatial diffuseness implied by proportional regionalization in key goods sectors. However, they do not identify a unique regional production network or a precise ranking of state exposure to local shocks. Bilateral shipment restrictions tighten the bounds, but much of the remaining uncertainty comes from large service and mixed sectors that are weakly covered by goods-movement data. The results show which exposure conclusions are supported by public data and which are imposed by maintained regionalization assumptions.
\end{abstract}
\endgroup

\noindent \textbf{Keywords}: production networks, spatial general equilibrium, partial identification, transportation polytopes, local shocks

\noindent \textbf{JEL Codes}: C67, D57, E23, R12

\clearpage

\section{Introduction}

Regional production-network calculations require assumptions about who buys from whom across space. A national input-output table tells us how much each industry buys from each other industry. Regional activity data tell us where industries are located. They do not tell us which supplier states sell intermediate inputs to which buyer states. The missing matrix is a bilateral state-sector buyer-seller matrix.

This paper studies what can be learned about that matrix before imposing a regionalization rule. We focus on the intermediate-input spatial kernel, the joint distribution of supplier and buyer locations for purchases from each supplier industry. Public data restrict its origin and destination margins only after the researcher chooses proxy mappings from regional activity to intermediate supply and demand. They do not identify the cells of the joint distribution. A proportional allocation, gravity completion, or support rule is therefore a maintained restriction on an unidentified coupling, not an observed regional IO table.

We develop a partial-identification approach to this problem. The admissible set consists of all nonnegative spatial kernels that satisfy maintained proxy margins, support restrictions, and auxiliary shipment moments. For linear one-step exposure statistics, the sharp identified interval is computed by transportation linear programs. This lets us separate exposure claims that survive the admissible set from claims that are imposed by a particular completed regional network.

The distinction matters because the same missing links are harmless for some propagation questions and first order for others. A pure national industry shock has no within-industry geographic variation, so the spatial kernel cancels in first-round exposure and in the Leontief accounting multiplier. This does not imply that industry shocks have small aggregate effects. It means that they cannot identify the spatial kernel. Local shocks are different. For regional and region-sector shocks, the kernel determines where downstream exposure lands, even when aggregate exposure looks stable.

We make the point in a regional trade-production model. The national input-output table gives the industry input share. What it does not give is the spatial sourcing share that says which supplier locations sell to which buyer locations. In an unrestricted regionalization, the regional intermediate-spending coefficient is
\begin{equation}
\label{eq:main_object_intro}
W_{(r,i),(s,j)}=\omega_{ji}\pi^{ij}_{r|s}.
\end{equation}
The national IO table identifies \(\omega_{ji}\), but not \(\pi^{ij}_{r|s}\). The empirical application imposes a lower-dimensional supplier-sector kernel, so buyer industries share the same spatial sourcing pattern within a supplier sector. This reduces the dimensionality of the missing bilateral matrix, but it does not make the remaining supplier-sector coupling identified.

The empirical application uses U.S. states, sixteen sectors, national IO coefficients, QCEW wage-bill and employment proxies for state-sector activity, and the 2017 Commodity Flow Survey. The CFS moments show that the proportional conditional-independence completion is too spatially diffuse in shipment-covered sectors. The average within-state shipment share is \HomeBiasObservedMean{} in the CFS, compared with \HomeBiasCIMean{} under conditional independence and \HomeBiasGravityMean{} under a structured gravity completion. Conditional independence also has a mean distance-bin total variation gap of \CIFrontierDistanceTV, compared with \GravityFrontierDistanceTV{} for structured gravity.

These shipment moments restrict the admissible set, but they do not identify local-shock incidence. For a Gulf regional shock, conditional independence lies outside the final sharp exposure interval for the directly shocked state of Louisiana: its point exposure is \GulfLouisianaCIPoint{}, below the final sharp lower endpoint of \GulfLouisianaLower{}. Thus the proportional completion is not simply one admissible point inside the final set. The same calculation also shows the limit of the available public data. Adding bilateral CFS cell bands lowers the median state exposure interval width by \InfoBilateralMedianWidthReductionPct{} percent relative to home-share and distance-bin moments. The final exact-margin specification still determines only \InfoPairwiseDeterminedPct{} percent of sharp state-pair exposure rankings, and only \InfoPairwiseTopTwentyPct{} percent of pairs involving the 20 states with the largest upper endpoints. No state is guaranteed to be in the top exposure decile, and all \InfoFinalPossibleTop{} states remain possible top-decile states.

The residual interval width is not concentrated only in the shipment-covered goods sectors. Services and other pooled sectors account for \ServicesOtherWidthSharePct{} percent of total state interval width, with FIRE and professional and business services alone accounting for \FireWidthSharePct{} and \PbsWidthSharePct{} percent. CFS-covered goods and logistics account for \CfsGoodsLogisticsWidthSharePct{} percent. Additional goods-shipment information therefore narrows the set, but it cannot by itself resolve state-level incidence when large service-sector sourcing relationships remain weakly observed. When wage-bill and employment margins are treated as bands rather than exact margins, aggregate exposure is no longer pinned down by exact-margin aggregation. Its width is \InfoBandedAggregateWidth{}, while the median state width remains \InfoBandedMedianWidth{}.

We make three contributions. First, we characterize sharp bounds for unknown regional production-network couplings. The endpoints are transportation linear programs, and the dual variables summarize which margins or moments bind a given bound. Second, we compare common regionalization restrictions using CFS shipment moments and public state-sector activity proxies. The comparison covers conditional independence, structured gravity completion, local support restrictions, bilateral CFS cell bands, and banded margins. Third, we apply the admissible set to local-shock exposure and show which state-incidence conclusions survive. The analysis does not estimate a full regional IO table, employment effect, output effect, or welfare effect. A full counterfactual still requires final-demand sourcing, behavioral parameters, and equilibrium closure. A causal event study also needs a design for the outcome response.

\subsection{Relation to the literature}

The closest substantive literature is macroeconomic work on production networks and spatial propagation. Input-output linkages shape the transmission of shocks across sectors and, in regional models, across locations \citep{Hulten1978,LongPlosser1983,Horvath1998,Horvath2000,Gabaix2011,FoersterEtAl2011,AcemogluEtAl2012,Atalay2017,CaliendoEtAl2018,BaqaeeFarhi2019}. A natural structural reference point is \citet{CaliendoEtAl2018}, who specify a quantitative spatial model to study regional and sectoral productivity shocks, equilibrium reallocation, aggregate effects, and welfare. \citet{AdaoArkolakisEsposito2020} provide a complementary reduced-form bridge between shift-share designs and spatial general equilibrium effects by estimating bilateral reduced-form elasticities across local labor markets. We ask a different question. We study what public data identify about one input into these calculations before imposing the trade, substitution, labor-market, final-demand, and market-clearing structure needed for a full counterfactual.

The analysis is also related to evidence on firm-level production networks. Firm-to-firm data show that actual buyer-seller links matter for shock transmission \citep{BarrotSauvagnat2016,BoehmEtAl2019,CarvalhoEtAl2021}. Those papers use more detailed link-level information than is available in public regional IO construction. We ask what remains identifiable when the researcher has national industry linkages, state-sector proxy margins, and shipment moments, but not buyer-seller intermediate-input links.

The paper also uses partial identification and data combination. The identified-set logic follows the partial-identification perspective of \citet{Manski2003}, \citet{Tamer2010}, and \citet{Molinari2020}. The empirical problem also resembles data-combination settings in which separate sources identify different margins or moments of an unobserved joint distribution \citep{CrossManski2002,RidderMoffitt2007}. We do not report confidence intervals for the identified set. If sampling inference over estimated bounds were added, the relevant econometric issues would be those studied by \citet{ImbensManski2004} and \citet{Stoye2009}.

Finally, the computation uses the same mathematical structure as discrete optimal transport with fixed marginals. We use that structure as an identification and accounting tool, not as a behavioral transport-cost model. The linear-program representation is standard in optimal transport \citep{Galichon2016}. The regional IO literature has long developed non-survey, partial-survey, location-quotient, and balancing methods for constructing regional and interregional tables \citep{MillerBlair2009,FleggEtAl1995,FleggWebber2000,BoeroEtAl2018}. We do not propose a new regionalization formula. We provide an identified-set analysis of the intermediate-input spatial kernel that such calculations often take as an input.

The remainder of the paper proceeds as follows. Section \ref{sec:general_framework} defines the spatial-kernel coupling problem, the maintained proxy margins, and the sharp linear-programming bounds. Section \ref{sec:shock_irf} explains which exposure and multiplier calculations depend on the kernel before structural closure. Section \ref{sec:ci} describes conditional independence, structured gravity completion, support restrictions, and CFS moment bands. Section \ref{sec:implementation} presents the U.S. state-sector evidence. Section \ref{sec:experiments} reports the identified-set exposure results. Section \ref{sec:empirical_applications} compares the Gulf regional shock with a national manufacturing shock. Section \ref{sec:conclusion} concludes. The appendix gives proofs, data details, nonlinear multiplier calculations, and robustness tables.

\section{Spatial Kernels and Sharp Bounds}
\label{sec:general_framework}
\label{sec:model}
\label{sec:object}

The empirical problem is a missing joint distribution. A national input-output
table gives supplier-industry spending shares \(\omega_{ji}\) and buyer
industry intermediate-input intensities \(\mu_j\). Their product is
\begin{equation}
\label{eq:Bji}
B_{ji}=\mu_j\omega_{ji}.
\end{equation}
The missing component is the spatial coupling for intermediate purchases of a
fixed supplier sector \(i\). Let \(K^i_{rs}\) be the share of supplier-\(i\)
intermediate purchases that pairs supplier state \(r\) with buyer state \(s\).
The destination-conditional sourcing share is
\begin{equation}
\label{eq:model_joint}
\pi^i_{r|s}=\frac{K^i_{rs}}{b^i_s},
\end{equation}
where \(b^i_s\) is the destination margin. The regional input coefficient is
\begin{equation}
\label{eq:model_A}
A^K_{(r,i),(s,j)}=B_{ji}\pi^i_{r|s}.
\end{equation}
Thus the national IO block identifies industry linkages, while \(K^i\)
allocates those linkages across space.

The baseline uses maintained proxy margins. The origin margin is the
state-sector activity share,
\begin{equation}
\label{eq:intermediate_origin_marginal}
a^i_r=\frac{x_{ri}}{\sum_\ell x_{\ell i}},
\end{equation}
with \(x_{ri}\) measured by the relevant state-sector activity proxy. The
destination margin measures where supplier-\(i\) inputs are used:
\begin{equation}
\label{eq:intermediate_destination_marginal}
D^I_{si}=\sum_j \mu_j\omega_{ji}x_{sj},
\qquad
b^i_s=\frac{D^I_{si}}{\sum_\ell D^I_{\ell i}}.
\end{equation}
These margins are not observed bilateral intermediate-input flows. They are
maintained mappings from public state-sector activity data and the national IO
block.

\label{subsec:two_layers}
This creates two layers of non-identification. First, state-sector activity
does not separately identify output sold to intermediate users, household
final demand, and residual final use. Second, even if the relevant origin and
destination totals were known, the bilateral matrix matching supplier states
to buyer states would remain unidentified. The baseline therefore studies the
narrower intermediate-input kernel. Final-demand sourcing and equilibrium
closure are left to Appendix~\ref{app:final_demand_accounting}.

\begin{proposition}[Non-identification from national IO and regional marginals]
\label{prop:nonidentification}
Fix the national IO matrix. In the unrestricted pair-specific case, suppose
that for an industry pair the researcher observes origin and destination
margins that have the same total mass. If there are at least two regions, then
the joint spatial coupling is not identified by the national IO matrix and
those margins. Under the supplier-sector restriction used below, even if the
researcher observes the corresponding supplier-sector margins, the joint
spatial coupling is still not identified. Unless the relevant margins are
degenerate, there are multiple couplings that match the same observed margins
but imply different regional networks.
\end{proposition}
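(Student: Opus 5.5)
The argument is a direct construction: I will exhibit two distinct couplings with the same margins and then show that they induce different regional coefficient matrices. Fix the national IO matrix, so \(B_{ji}=\mu_j\omega_{ji}\) is known. For a fixed supplier sector \(i\), take origin and destination margins \(a^i\) and \(b^i\) of equal total mass (normalized to one). The set of admissible couplings is the transportation polytope
\[
\Pi(a^i,b^i)=\Bigl\{K\ge 0:\ \textstyle\sum_s K_{rs}=a^i_r,\ \sum_r K_{rs}=b^i_s\Bigr\}.
\]
The unrestricted pair-specific case has the same form, with margins indexed by the industry pair \((i,j)\). Throughout I use the independence coupling \(K^0_{rs}=a^i_r b^i_s\) as the base point. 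It is feasible, so the admissible set is nonempty.

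The first step is to perturb \(K^0\) along a zero-margin direction. Since the margins are nondegenerate, there exist distinct origins \(r\neq r'\) with \(a^i_r,a^i_{r'}>0\) and distinct destinations \(s\neq s'\) with \(b^i_s,b^i_{s'}>0\). This is the case excluded by the statement's ``unless the relevant margins are degenerate'' clause. With at least two regions, the only failure is a point mass on one side, because a point-mass margin forces the unique coupling \(K_{rs}=a_r\) or \(K_{rs}=b_s\). Define the 2\(\times\)2 cycle \(E=e_re_s^\top-e_re_{s'}^\top-e_{r'}e_s^\top+e_{r'}e_{s'}^\top\). This direction has zero row and column sums. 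Therefore \(K^\varepsilon=K^0+\varepsilon E\) satisfies both margins for every \(\varepsilon\), and it remains nonnegative for \(|\varepsilon|\le \min\{a_rb_{s'},a_{r'}b_s\}\), which is strictly positive. This gives a continuum of distinct admissible couplings. The national IO matrix does not enter the constraints on \(K^i\) at all, so it cannot distinguish among them.

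The second step shows that the different couplings imply different networks. By \eqref{eq:model_joint}, \(\pi^{i,\varepsilon}_{r|s}=\pi^{i,0}_{r|s}+\varepsilon/b^i_s\), which is a genuine change because \(b^i_s>0\). Then by \eqref{eq:model_A}, \(A^{K^\varepsilon}_{(r,i),(s,j)}-A^{K^0}_{(r,i),(s,j)}=B_{ji}\varepsilon/b^i_s\), which is nonzero for any buyer industry \(j\) with \(B_{ji}>0\). In the unrestricted case the analogous statement uses \(W\) from \eqref{eq:main_object_intro} with \(\omega_{ji}>0\).

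The supplier-sector claim needs one extra observation, and I expect it to be the main obstacle. The restriction only lowers dimension: it requires the same kernel \(K^i\) for every buyer \(j\). The perturbation \(\varepsilon E\) is applied once to \(K^i\) and so is automatically shared across buyer industries. The restriction therefore does not eliminate it. The care lies in fixing precisely what ``degenerate'' must mean. Each of \(a^i\) and \(b^i\) must have at least two positive entries, and for the network to differ, sector \(i\) must have at least one buyer with \(B_{ji}>0\). If \(B_{\cdot i}=0\), distinct kernels induce the same \(A\), and that case is also excluded as degenerate.
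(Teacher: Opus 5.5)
Your proof is correct. It differs from the paper in the key step.

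\textbf{How the paper argues.} The paper counts constraints: the margins impose $2R-1$ independent linear equalities on $R^2$ unknowns, so for $R\geq 2$ the affine solution set has positive dimension. It then asserts that under nondegenerate margins the transport polytope itself has positive dimension. Its network step is the same as yours, via $W_{(r,i),(s,j)}=\omega_{ji}K^{ij}_{rs}/b^{ij}_s$.

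\textbf{What your construction adds.} Counting alone cannot establish the polytope claim, because nonnegativity can collapse the affine set to a single point. A point-mass margin gives a unique coupling even though $R^2>2R-1$. What is needed is a feasible coupling that is strictly positive on the block of positive rows and columns. Your product coupling $a^i(b^i)'$, perturbed along the $2\times 2$ cycle direction, supplies exactly this. Your construction therefore makes the paper's positive-dimension step explicit. It also pins down what ``degenerate'' must mean, including the case $B_{\cdot i}=0$ (or $\omega_{ji}=0$ in the pair-specific case). There, distinct kernels induce the same network, which the paper's proof passes over when it says that each such matrix implies a different regional network.

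\textbf{What the counting argument adds.} It conveys the size of the unidentified set: $(R-1)^2$ free directions when the margins are strictly positive. Your one-parameter family does not show this.

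\textbf{A small correction.} The range $|\varepsilon|\leq\min\{a_rb_{s'},a_{r'}b_s\}$ is correct only for $\varepsilon\geq 0$. For $\varepsilon<0$ the binding entries are $(r,s)$ and $(r',s')$, so the two-sided range is $|\varepsilon|\leq\min\{a_rb_s,\,a_rb_{s'},\,a_{r'}b_s,\,a_{r'}b_{s'}\}$. This does not affect your argument, since any nondegenerate interval of $\varepsilon$ suffices.
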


A two-region example makes the non-identification concrete. Let the origin
and destination margins both be \((1/2,1/2)\). The two couplings
\[
\begin{pmatrix}
1/2 & 0\\
0 & 1/2
\end{pmatrix}
\qquad \text{and} \qquad
\begin{pmatrix}
0 & 1/2\\
1/2 & 0
\end{pmatrix}
\]
match the same margins. The first is entirely local sourcing, while the
second is entirely cross-region sourcing. National IO coefficients and
regional margins alone cannot distinguish them.

For a generic finite coupling problem, let \(K\) be a nonnegative matrix with
row margin \(a\) and column margin \(b\). Support restrictions set selected
entries to zero. Auxiliary moments have the form
\[
m_h(K)=\sum_{x,y}g_{h,xy}K_{xy}.
\]
With a support set \(\mathcal S\) and a moment set \(\mathcal M\), the
admissible set is
\begin{equation}
\label{eq:generic_admissible_set}
\mathcal A(a,b,\mathcal S,\mathcal M)
=
\{K\geq0:\sum_y K_{xy}=a_x,\ \sum_x K_{xy}=b_y,\ 
K_{xy}=0 \text{ outside }\mathcal S,\ m(K)\in\mathcal M\}.
\end{equation}
Conditional independence is one feasible point when \(K_{xy}=a_xb_y\).
Structured gravity is another point completion. Support restrictions and CFS
moment bands define subsets of the transport polytope.

\begin{theorem}[Sharp linear identified sets and LP dual]
\label{thm:generic_lp_dual}
Suppose \(\mathcal M=\{m:\underline m\leq m\leq \overline m\}\), the
admissible set in equation \eqref{eq:generic_admissible_set} is nonempty, and
\(T(K)=c'k\) is linear in the vectorized coupling \(k\). Then the sharp
identified set for \(T(K)\) is \([\underline T,\overline T]\), where
\begin{equation}
\label{eq:generic_primal_min}
\underline T
=
\min_{k\geq0} c'k
\quad
\text{s.t.}
\quad
P_X k=a,\ P_Y k=b,\ Gk\leq\overline m,\ -Gk\leq-\underline m.
\end{equation}
Variables are restricted to \(\mathcal S\), and \(\overline T\) is obtained by
reversing the objective. The dual of the lower-bound program is
\begin{equation}
\label{eq:generic_dual}
\max_{u,v,\lambda^+,\lambda^-}
a'u+b'v+\overline m'\lambda^+-\underline m'\lambda^-,
\end{equation}
subject to
\[
u_x+v_y+\sum_{h=1}^J(\lambda^+_h-\lambda^-_h)g_{h,xy}
\leq c_{xy}
\quad \text{for all }(x,y)\in\mathcal S,
\]
with \(\lambda^+\leq0\) and \(\lambda^-\leq0\). Strong duality holds under the
maintained feasibility and boundedness conditions.
\end{theorem}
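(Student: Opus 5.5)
The plan has two parts: sharpness of the interval, then the dual by standard LP duality. Throughout I work with the vectorized coupling \(k\), with variables only for cells in \(\mathcal S\). I would also make explicit that the primal constraints \(P_Xk=a\), \(P_Yk=b\), \(Gk\le\overline m\), \(-Gk\le-\underline m\) are exactly the defining conditions of \(\mathcal A(a,b,\mathcal S,\mathcal M)\) in equation \eqref{eq:generic_admissible_set}, given \(\mathcal M=\{m:\underline m\le m\le\overline m\}\). Hence the LP feasible region is the admissible set itself.

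\emph{Step 1: the feasible region is a nonempty compact convex polytope.} It is an intersection of finitely many closed half-spaces and hyperplanes, so it is a polyhedron. It is bounded because \(k\ge0\) and \(\sum_y K_{xy}=a_x\) force \(0\le K_{xy}\le a_x\). Nonemptiness is assumed.

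\emph{Step 2: the image of \(T\) is exactly \([\underline T,\overline T]\).} Since \(T(K)=c'k\) is continuous and the region is compact, the minimum \(\underline T\) and maximum \(\overline T\) are attained at some \(k^{\min}\) and \(k^{\max}\) (vertices, if desired). The identified set \(\{T(K):K\in\mathcal A\}\) is the image of a convex set under a linear map. It therefore contains \(\underline T\) and \(\overline T\), and it contains every intermediate value via \(k_t=(1-t)k^{\min}+tk^{\max}\in\mathcal A\), since \(c'k_t\) is linear in \(t\). Every point of the interval is therefore generated by some admissible coupling, and no point outside it is, which is sharpness. The upper endpoint is the negative of the minimum of \(-c'k\), which is the ``reversed objective''.

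\emph{Step 3: the dual and strong duality.} I would write the lower-bound program in the canonical form \(\min\{c'k: k\ge0,\ E k=a,\ Fk=b,\ Gk\le\overline m,\ -Gk\le-\underline m\}\), with \(E=P_X\) and \(F=P_Y\). I would attach free multipliers \(u,v\) to the equalities and multipliers \(\lambda^+,\lambda^-\) to the two inequality blocks. Forming the Lagrangian \(c'k-u'(Pk-a)-\dots\), or simply applying the standard min/max LP duality table, gives the dual objective \(a'u+b'v+\overline m'\lambda^++(-\underline m)'\lambda^-\). Because each primal variable satisfies \(k_{xy}\ge0\), the dual constraint for column \((x,y)\in\mathcal S\) is an inequality: \(u_x+v_y+\sum_h g_{h,xy}\lambda^+_h-\sum_h g_{h,xy}\lambda^-_h\le c_{xy}\). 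Here I use that column \((x,y)\) of \(P_X\) and \(P_Y\) has a single one, in rows \(x\) and \(y\) respectively. The sign rule for a minimization with \(\le\) constraints gives \(\lambda^\pm\le0\).

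Strong duality then follows from the LP duality theorem: the primal is feasible and, by Step 1, bounded, so it has an optimum, and the dual attains the same value.

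The only delicate point I expect is bookkeeping of sign conventions. The paper writes \(\le\) rows with nonpositive multipliers rather than the more common \(\ge\) rows with nonnegative ones, so I would verify the dual form against the duality table carefully. I would also note that restricting variables to \(\mathcal S\) simply drops both the columns outside \(\mathcal S\) and their dual constraints. Conceptually, sharpness in Step 2 is the main content, and it rests only on convexity of \(\mathcal A\).
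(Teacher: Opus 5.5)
Your proposal is correct and follows the paper's proof. Compactness of the margin-constrained polytope gives attained endpoints, convexity makes the image the whole interval, and sharpness holds because the LP feasible region is exactly $\mathcal A(a,b,\mathcal S,\mathcal M)$; the dual with nonpositive multipliers on the two $\le$ blocks comes from the standard LP duality table for a minimization, and strong duality from feasibility plus boundedness, with signs that agree with equation \eqref{eq:generic_dual}.
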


\begin{proposition}[Nested information]
\label{prop:nested_information}
Let \(\mathcal M_1\subseteq\mathcal M_0\). If the corresponding admissible
sets are nonempty, then the identified intervals for any target \(T\) satisfy
\[
\underline T(\mathcal M_0)\leq \underline T(\mathcal M_1)
\leq \overline T(\mathcal M_1)\leq \overline T(\mathcal M_0).
\]
\end{proposition}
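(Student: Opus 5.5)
The plan is to show that the admissible sets are nested and then use monotonicity of infimum and supremum under set inclusion. Fix margins \(a,b\) and a support set \(\mathcal S\), and write \(\mathcal A_\ell=\mathcal A(a,b,\mathcal S,\mathcal M_\ell)\) for \(\ell=0,1\) as in equation \eqref{eq:generic_admissible_set}. The only place the moment set enters the definition is the condition \(m(K)\in\mathcal M_\ell\). All other constraints are common to both sets: nonnegativity, the row and column margins, and the support restriction. So if \(K\in\mathcal A_1\), then \(m(K)\in\mathcal M_1\subseteq\mathcal M_0\), and hence \(K\in\mathcal A_0\). This gives \(\mathcal A_1\subseteq\mathcal A_0\).

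Next I would identify the interval endpoints with \(\underline T(\mathcal M_\ell)=\inf_{K\in\mathcal A_\ell}T(K)\) and \(\overline T(\mathcal M_\ell)=\sup_{K\in\mathcal A_\ell}T(K)\). This is how the sharp identified set is defined: it is the image \(T(\mathcal A_\ell)\). When \(T\) is linear and \(\mathcal M\) is a box, Theorem~\ref{thm:generic_lp_dual} shows that these are attained LP optima. More generally, \(\mathcal A_\ell\) is a subset of the compact transport polytope, so the extremes are finite whenever \(T\) is bounded on it, for example when \(T\) is continuous. Because the proposition says ``any target \(T\)'', I would state the result in terms of inf and sup, so that it does not depend on attainment.

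Given \(\emptyset\neq\mathcal A_1\subseteq\mathcal A_0\), the infimum over the larger set is at most the infimum over the smaller set, so \(\underline T(\mathcal M_0)\le\underline T(\mathcal M_1)\). By the same argument applied to the supremum, \(\overline T(\mathcal M_1)\le\overline T(\mathcal M_0)\). The middle inequality \(\underline T(\mathcal M_1)\le\overline T(\mathcal M_1)\) holds because \(\mathcal A_1\) is nonempty: for any \(K\in\mathcal A_1\), \(\inf\le T(K)\le\sup\).

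No step here is a real obstacle. The only point that needs care is the interpretation of the bounds when \(T\) is nonlinear or the extremes are not attained. I would handle it with the inf/sup formulation, together with the remark that the feasible set is compact, so the bounds are finite.
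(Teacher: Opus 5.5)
Your proposal is correct and follows the paper's proof: you show \(\mathcal A(\mathcal M_1)\subseteq\mathcal A(\mathcal M_0)\) because the moment condition is the only constraint that changes, and then use that minimizing over a smaller set cannot lower the value and maximizing over it cannot raise it. Your inf/sup formulation, and your explicit use of the nonemptiness of \(\mathcal A(\mathcal M_1)\) for the middle inequality, are small clarifications that the paper leaves implicit.
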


When auxiliary moments are estimated or reconciled from imperfect data, we use
moment bands:
\begin{equation}
\label{eq:moment_band_set}
\mathcal A_\alpha
=
\{K\in\mathcal A(a,b,\mathcal S,\mathbb R^J):
\hat m_h-c_{h,\alpha}\leq m_h(K)\leq \hat m_h+c_{h,\alpha}
\text{ for all }h\}.
\end{equation}

\begin{proposition}[Moment-band coverage]
\label{prop:moment_band_coverage}
Let \(K_0\) be the true coupling and suppose it satisfies the maintained
margins and support restriction. If
\[
\Pr\left(
\hat m_h-c_{h,\alpha}\leq m_h(K_0)\leq \hat m_h+c_{h,\alpha}
\text{ for all }h
\right)\geq 1-\alpha,
\]
then the interval obtained by minimizing and maximizing a linear functional
\(T(K)\) over \(\mathcal A_\alpha\) covers \(T(K_0)\) with probability at
least \(1-\alpha\).
\end{proposition}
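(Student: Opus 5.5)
The plan is a direct event-inclusion argument. Let \(E\) denote the event that the moment bands contain the true moments,
\[
E=\left\{\hat m_h-c_{h,\alpha}\leq m_h(K_0)\leq \hat m_h+c_{h,\alpha}\ \text{for all } h\right\},
\]
which by hypothesis satisfies \(\Pr(E)\geq 1-\alpha\). The randomness sits only in \(\hat m\) and \(c_{h,\alpha}\). The coupling \(K_0\), the margins \(a,b\), the support \(\mathcal S\) and the target \(T\) are treated as fixed. If the margins were themselves estimated, the same argument would go through provided they are included in the conditioning event.

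\emph{Step 1: on \(E\), the truth is admissible.} By assumption \(K_0\geq 0\) satisfies the maintained margins and the support restriction. Hence \(K_0\in\mathcal A(a,b,\mathcal S,\mathbb R^J)\). On \(E\) it also satisfies every band inequality in \eqref{eq:moment_band_set}, so \(K_0\in\mathcal A_\alpha\). In particular \(\mathcal A_\alpha\) is nonempty on \(E\), so the two programs defining the interval are feasible there.

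\emph{Step 2: the interval contains \(T(K_0)\).} Let \(\underline T_\alpha=\inf_{K\in\mathcal A_\alpha}T(K)\) and \(\overline T_\alpha=\sup_{K\in\mathcal A_\alpha}T(K)\). For boundedness, note that \(\mathcal A_\alpha\) lies inside the transport polytope, which is compact because \(0\leq K_{xy}\leq \min(a_x,b_y)\). Since \(T\) is linear, both extrema are attained, exactly as in the LP representation of Theorem~\ref{thm:generic_lp_dual}. Because \(K_0\) is a feasible point, \(\underline T_\alpha\leq T(K_0)\leq \overline T_\alpha\) holds on \(E\). By Theorem~\ref{thm:generic_lp_dual}, the computed interval is exactly \([\underline T_\alpha,\overline T_\alpha]\), and the inclusion holds regardless of any convexity or sharpness considerations beyond feasibility.

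\emph{Step 3: conclude.} Step 2 gives the event inclusion \(E\subseteq\{T(K_0)\in[\underline T_\alpha,\overline T_\alpha]\}\). Monotonicity of probability then yields coverage probability at least \(\Pr(E)\geq1-\alpha\).

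The only delicate point is measurability of the coverage event, together with the convention for the interval when \(\mathcal A_\alpha\) is empty; call the empty case a non-covering event. Neither affects the bound, because the argument uses only the inclusion of \(E\) into the coverage event. If needed, measurability can be stated in terms of outer probability.
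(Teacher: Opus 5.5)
Your proposal is correct and follows the same event-inclusion argument as the paper: on the event that every true moment lies in its band, \(K_0\in\mathcal A_\alpha\), so the minimum and maximum of \(T\) over \(\mathcal A_\alpha\) bracket \(T(K_0)\), and the coverage probability is at least \(\Pr(E)\geq 1-\alpha\). Your added remarks on compactness, attainment of the extrema, the empty-set convention, and measurability are harmless refinements that the paper leaves implicit.
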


For exposure, the target is linear. Let \(Q(E)=\sum_{s,j}q_{sj}E_{sj}\). For a
shock \(z\),
\begin{equation}
\label{eq:linear_exposure_functional}
Q(E^K(z))
=
\sum_i\sum_{r,s}\ell^i_{rs}(q,z)K^i_{rs},
\qquad
\ell^i_{rs}(q,z)
=
\frac{z_{ri}}{b^i_s}\sum_j q_{sj}B_{ji}.
\end{equation}

\begin{proposition}[Sharp bounds for linear exposure]
\label{prop:admissible_exposure}
If each sectoral admissible set is nonempty, compact, convex, and defined by
linear margins, support restrictions, or moment bands, then the admissible
values of \(Q(E^K(z))\) form the interval obtained by minimizing and
maximizing equation~\eqref{eq:linear_exposure_functional} over the product of
sectoral admissible sets. The endpoints are sharp and are computed by
sector-by-sector transportation linear programs.
\end{proposition}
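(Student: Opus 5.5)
The plan is to reduce the statement to Theorem~\ref{thm:generic_lp_dual} applied sector by sector, using that the exposure functional in equation~\eqref{eq:linear_exposure_functional} is additively separable across supplier sectors. First, I will fix the shock \(z\) and weights \(q\), and note that the coefficients \(\ell^i_{rs}(q,z)\) do not depend on \(K\). This holds because \(b^i_s\) is a maintained margin, not a function of the coupling. So \(Q(E^K(z))=\sum_i T_i(K^i)\) with \(T_i(K^i)=\sum_{r,s}\ell^i_{rs}K^i_{rs}\) linear. The joint admissible set is the Cartesian product \(\mathcal A=\prod_i\mathcal A^i\), since the margins, supports and moment bands are each stated within a supplier sector.

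Second, I will show the image \(Q(\mathcal A)\) is an interval. Each \(\mathcal A^i\) is nonempty, compact and convex, so the product is too. \(Q\) is continuous and linear, hence the image of a convex set is a convex subset of \(\mathbb R\). By compactness it is closed and bounded. So the image is \([\underline Q,\overline Q]\) with both endpoints attained. Sharpness of every interior value follows from convexity: the point \(tK_{\min}+(1-t)K_{\max}\) lies in \(\mathcal A\) and delivers \(t\underline Q+(1-t)\overline Q\).

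Third, I will show the optimization decouples. Because the objective is a sum of functions of separate blocks and the feasible set is a product, \(\min_{\mathcal A}Q=\sum_i\min_{\mathcal A^i}T_i\), and likewise for the maximum. The inequality \(\geq\) is immediate. For equality, I will assemble the sectorwise minimizers, which exist by compactness, into a single feasible point of the product. Each sectoral problem has exactly the form of equation~\eqref{eq:generic_primal_min}: row margin \(a^i\), column margin \(b^i\), variables restricted to \(\mathcal S^i\), and band constraints \(Gk\leq\overline m\), \(-Gk\leq-\underline m\). By Theorem~\ref{thm:generic_lp_dual}, each is a transportation LP with sharp endpoints and a strong-dual certificate.

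The only delicate point is bookkeeping rather than mathematics. I must confirm that no maintained restriction links different sectors \(i\); otherwise the product structure, and hence the decoupling, fails. I must also confirm that the denominators \(b^i_s\) are positive wherever \(\mathcal S^i\) permits mass. If some \(b^i_s=0\), the column constraint forces \(K^i_{rs}=0\), so those cells can simply be dropped from the program. Compactness is in fact automatic here, since \(K^i\geq0\) with fixed total mass. I will note this so that the hypothesis is seen to be harmless.
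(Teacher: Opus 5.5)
Your proposal is correct and follows essentially the same route as the paper. Both arguments rewrite $Q(E^K(z))$ as a sum of sector-wise linear functionals whose coefficients do not depend on $K$. Both then obtain an interval with attained endpoints from compactness and convexity, and split the optimization across the product of sectoral transportation polytopes.

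Your additional steps only make explicit what the paper's proof leaves implicit. These are the convex-combination argument for intermediate values, the sector-by-sector appeal to Theorem~\ref{thm:generic_lp_dual}, and the handling of cells with $b^i_s=0$.
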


The theorem and propositions are the main tools used below. They also define
the limits of the analysis. The bounds are sharp for linear exposure given the
maintained margins and moment bands. They do not identify behavioral
elasticities, final-demand substitution, factor adjustment, or welfare.

\section{Propagation Before Structural Closure}
\label{sec:shock_irf}

What does the missing kernel affect? Given an admissible intermediate-input
kernel \(K\), define the regional input matrix
\begin{equation}
\label{eq:macro_A}
A^K_{(r,i),(s,j)}
=
B_{ji}\pi^i_{r|s}.
\end{equation}
The first-round exposure of destination node \((s,j)\) to a shock vector \(z\)
is
\begin{equation}
\label{eq:general_exposure}
E^K_{sj}(z)
=
\sum_i B_{ji}\sum_r \pi^i_{r|s}z_{ri}.
\end{equation}
This one-step exposure is linear in \(K\). With a unit shock, a value of
0.001 means one-tenth of one percentage point in this accounting exposure
index before equilibrium responses. The corresponding Leontief accounting
multiplier is
\begin{equation}
\label{eq:macro_multiplier}
M^K
=
\left(I-\left(A^K\right)'\right)^{-1},
\end{equation}
whenever \(\rho(A^K)<1\). A Domar-style accounting exposure index is
\begin{equation}
\label{eq:domar_loss}
\mathcal L^K(z)
=
\sum_{s,j}\lambda_{sj}\left[M^Kz\right]_{sj}.
\end{equation}
These are accounting quantities. Employment, output, and welfare responses
require final demand, prices, factor adjustment, financing, and market
clearing.

For a pure industry shock, \(z_{ri}=z_i\) for every origin \(r\), the
first-round exposure of buyer node \((s,j)\) is
\begin{equation}
\label{eq:industry_shock_exposure}
E_{sj}^{I}(z)
=
\sum_i B_{ji}z_i,
\end{equation}
because the sourcing shares sum to one. The same cancellation holds for
Leontief accounting exposure because each term in the Leontief series maps a
vector that is constant across origins within an industry into another vector
with the same property. Pure industry shocks therefore cannot validate a
spatial regionalization rule.

For a pure regional shock, \(z_{ri}=z_r\), the kernel generally determines
where exposure lands. One aggregate incidence measure still cancels. If
exposure is first aggregated over destination regions using the
supplier-sector destination margins \(b^i_s\), then
\begin{equation}
\label{eq:regional_aggregate_industry}
\bar E_j^R(z)
=
\sum_i B_{ji}\sum_s b^i_s\sum_r \pi^i_{r|s}z_r
=
\sum_i B_{ji}\sum_r a^i_r z_r .
\end{equation}

\begin{proposition}[Invariance and kernel dependence]
\label{prop:shock_cancellation}
For first-round exposure in equation~\eqref{eq:general_exposure}, every
\(K\in\mathcal A(\mathcal M)\) gives the same buyer-industry exposure to a
pure industry shock. In the Leontief accounting system in
equation~\eqref{eq:macro_multiplier}, the same invariance holds for the full
multiplier response to pure industry shocks. For a pure regional shock, the
kernel is not needed for the destination-margin-weighted industrial incidence
in equation~\eqref{eq:regional_aggregate_industry}. The kernel is needed to
allocate exposure across destination regions whenever the shock has geographic
content.
\end{proposition}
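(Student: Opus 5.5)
The argument splits into four claims, each checked directly from the definitions in equations \eqref{eq:general_exposure}--\eqref{eq:regional_aggregate_industry}. The only property of an admissible kernel we need is that every \(K\in\mathcal A(\mathcal M)\) has column margins \(b^i\). Hence \(\sum_r \pi^i_{r|s}=\sum_r K^i_{rs}/b^i_s=1\) whenever \(b^i_s>0\). Destinations with \(b^i_s=0\) carry no supplier-\(i\) demand, and we adopt the convention that \(\pi^i_{\cdot|s}\) is any probability vector there. Support restrictions and moment bands play no role, so the statement holds uniformly over \(\mathcal A(\mathcal M)\).

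\emph{First-round industry shocks.} Substitute \(z_{ri}=z_i\) into \eqref{eq:general_exposure}, pull \(z_i\) out of the inner sum, and use \(\sum_r\pi^i_{r|s}=1\). This gives \(E^K_{sj}(z)=\sum_i B_{ji}z_i\), which does not depend on \(K\) or on \(s\).

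\emph{Leontief multiplier.} Let \(V\) be the subspace of vectors on nodes \((r,i)\) that are constant across \(r\) within each industry. The key step is to show that \((A^K)'\) maps \(V\) into \(V\) and acts there as the national matrix. Take \(z\in V\) with \(z_{ri}=z_i\). Then
\[
[(A^K)'z]_{sj}=\sum_{r,i}B_{ji}\pi^i_{r|s}z_i=\sum_i B_{ji}z_i,
\]
which is again constant in \(s\) and independent of \(K\). By induction, \(((A^K)')^n z\) is the national vector \(B^n z\) replicated across regions, for every \(n\). Since \(\rho(A^K)<1\), the Neumann series \(\sum_n ((A^K)')^n z\) converges to \(M^K z\). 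Its terms coincide with those of the national Leontief series, so \(M^K z\) is the same for every admissible \(K\). The Domar index \eqref{eq:domar_loss} is a fixed linear functional of \(M^Kz\), so it is invariant as well.

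The main obstacle is convergence. Invariance only requires \(\rho(A^K)<1\), which the statement already assumes wherever \(M^K\) is defined. The restriction of \((A^K)'\) to \(V\) is the national \(B\), and the series \(\sum_n B^n z\) converges because \(\rho(B)\le\rho(A^K)<1\) on this invariant subspace. If one prefers to avoid series altogether, there is a direct route: the replicated vector \(y\) built from \(y_{sj}=[(I-B)^{-1}z]_j\) solves \((I-(A^K)')y=z\). Uniqueness of that solution then follows from the invertibility of \(I-(A^K)'\).

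\emph{Regional shocks.} Set \(z_{ri}=z_r\) and weight destinations by \(b^i_s\). We have \(b^i_s\pi^i_{r|s}=K^i_{rs}\), so summing over \(s\) and using the row margin \(\sum_s K^i_{rs}=a^i_r\) yields the right side of \eqref{eq:regional_aggregate_industry}, which depends only on \(a^i\).

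\emph{Dependence under geographic content.} To show the kernel is genuinely needed, it suffices to exhibit, for a shock with \(z_{ri}\) non-constant in \(r\) for some \(i\) with \(B_{ji}>0\), two admissible kernels giving different \(E_{sj}\). The plan is to take, as in Proposition~\ref{prop:nonidentification} and the two-region example, a \(2\times2\) cycle perturbation \(K^i\pm\varepsilon(e_{r s}-e_{r s'}-e_{r' s}+e_{r' s'})\). This perturbation preserves both margins and, for small \(\varepsilon\), stays nonnegative when the cells lie in the relative interior of the admissible set. It changes \(E_{sj}\) by \(\varepsilon B_{ji}(z_{ri}-z_{r'i})/b^i_s\neq0\).

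The interpretation is that the kernel is needed generically, not for every admissible set. If support restrictions or moment bands shrink \(\mathcal A(\mathcal M)\) to a single point, no such perturbation exists and the dependence claim fails for that set.
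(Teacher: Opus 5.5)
Your proposal is correct and follows the paper's own proof: it uses the sum-to-one property of \(\pi^i_{\cdot|s}\) for industry shocks, the fact that \((A^K)'\) maps region-constant vectors to region-constant, \(K\)-independent vectors so that every term of the Leontief series is kernel-free, and the collapse \(\sum_s b^i_s\pi^i_{r|s}z_r=\sum_{r,s}K^i_{rs}z_r=\sum_r a^i_r z_r\) for regional shocks. Your additions make rigorous what the paper only asserts in its final sentence: the convergence check via \(\rho(B)\le\rho(A^K)\), the direct solution obtained by replicating \((I-B)^{-1}z\) across regions, and the explicit \(2\times2\) cycle perturbation, together with the correct caveat that kernel dependence is generic and fails if the admissible set is a single point.
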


The proposition gives the paper's boundary result. Industry shocks can answer
industry propagation questions, but they say little about regional incidence.
Regional and region-sector shocks require the spatial kernel for local
incidence.

Leontief exposure is harder to bound sharply because \(M^K\) is nonlinear in
\(K\). Exact global bounds are nonlinear optimization problems. We therefore
use one local outer-bound calculation only as an appendix result. Appendix
\ref{app:additional_evidence} gives the sensitivity formula and the
conservative remainder bound used in the nonlinear multiplier table. The main
empirical results below are sharp for linear one-step exposure, not for the
full nonlinear multiplier.

\section{Admissible Spatial-Kernel Restrictions}
\label{sec:ci}

Common regionalization rules are restrictions on the feasible coupling set.
Some select a point completion. Others define an admissible subset over which
linear exposure can be bounded. The question is whether the restriction is
admissible relative to maintained margins, support restrictions, and auxiliary
shipment moments.

The proportional completion fills in the missing joint coupling as
\begin{equation}
\label{eq:ci}
K^{i,CI}_{rs}=a^i_r b^i_s .
\end{equation}
In destination-conditional form, every buyer region sources supplier-\(i\)
inputs from the same origin distribution. Conditional independence is also
the maximum-entropy completion, maximizing
\begin{equation}
\label{eq:entropy}
-\sum_{r,s}K^i_{rs}\log K^i_{rs}.
\end{equation}

\begin{proposition}[Conditional independence as maximum entropy]
\label{prop:max_entropy}
Among all nonnegative couplings with origin marginal \(a^i\) and destination
marginal \(b^i\), the proportional matrix \(K^{i,CI}_{rs}=a^i_r b^i_s\) is the
unique maximum-entropy coupling when the marginals are positive. Equivalently,
it imposes zero mutual information between supplier and buyer locations
conditional on supplier industry.
\end{proposition}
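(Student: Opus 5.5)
The plan is to use the Gibbs inequality (nonnegativity of Kullback--Leibler divergence) instead of Lagrange multipliers, because it gives existence, optimality and uniqueness at once and avoids boundary issues at zero entries. Fix a supplier sector \(i\) and write \(a=a^i\), \(b=b^i\), with \(\sum_r a_r=\sum_s b_s=1\) and all entries strictly positive. Let \(K\) be any nonnegative coupling with these margins, and use the convention \(0\log 0=0\). The product matrix \(P_{rs}=a_rb_s\) is strictly positive on every cell. The key identity is
\[
-\sum_{r,s}K_{rs}\log K_{rs}
=
-\sum_{r,s}K_{rs}\log\frac{K_{rs}}{a_rb_s}
-\sum_{r,s}K_{rs}\log a_r
-\sum_{r,s}K_{rs}\log b_s .
\]
Summing over \(s\) in the second term gives \(\sum_r a_r\log a_r\), and summing over \(r\) in the third gives \(\sum_s b_s\log b_s\), because \(K\) has margins \(a\) and \(b\). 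Hence
\[
H(K)=H(a)+H(b)-D_{\mathrm{KL}}(K\,\|\,a\otimes b),
\]
where \(H\) denotes Shannon entropy. The terms \(H(a)+H(b)\) are the same for every admissible coupling.

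Maximizing \(H(K)\) over the transport polytope is therefore equivalent to minimizing \(D_{\mathrm{KL}}(K\,\|\,a\otimes b)\). Both \(K\) and \(a\otimes b\) are probability vectors on the same finite set, and \(a\otimes b\) has full support. Gibbs' inequality, which follows from Jensen's inequality applied to the strictly concave logarithm, then gives \(D_{\mathrm{KL}}\ge0\), with equality if and only if \(K=a\otimes b\). The product matrix is itself feasible, since its row and column sums are \(a_r\) and \(b_s\). It therefore attains the bound and is the unique maximizer.

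Next I check the cells where \(K_{rs}=0\). Those terms contribute zero on both sides of the decomposition under the \(0\log0\) convention. Positivity of the marginals guarantees that every \(\log(a_rb_s)\) is finite, so the decomposition is valid for every feasible \(K\), including those with zero entries. This is the one place where positivity of the marginals is needed. I expect this to be the only mildly delicate step; the rest is algebra.

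Finally, the equivalence claim follows from the same identity. The quantity \(D_{\mathrm{KL}}(K\,\|\,a\otimes b)\) is by definition the mutual information \(I(R;S)\) between the origin and destination locations under the joint law \(K\), for the fixed supplier industry \(i\). Maximizing entropy subject to the margins is therefore the same as minimizing mutual information. The minimum value is zero and it is attained only at independence, \(K_{rs}=a_rb_s\). Equivalently, \(\pi^i_{r|s}=a_r\) for every destination \(s\), which matches the destination-conditional description following equation~\eqref{eq:ci}.
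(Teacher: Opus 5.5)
Your proof is correct, but it takes a different route from the paper's.

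The paper argues variationally. Entropy is strictly concave, and the Lagrangian stationarity condition $-1-\log K_{rs}+\lambda_r+\mu_s=0$ forces $K_{rs}=A_rB_s$. The margins then pin this down to $a_rb_s$. The mutual-information clause is added as a separate fact ($D_{KL}(K\|ab')=0$ iff $K=ab'$).

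You instead use the exact decomposition $H(K)=H(a)+H(b)-D_{\mathrm{KL}}(K\,\|\,a\otimes b)$ together with Gibbs' inequality. This gives optimality and uniqueness in one step, and it covers couplings with zero cells directly. The paper's interior first-order condition leaves implicit why no boundary coupling does better. One closes that by noting that $ab'$ lies in the relative interior of the transport polytope, is stationary, and entropy is concave. Your identity also makes the ``equivalently'' clause literal: the entropy shortfall of any feasible $K$ is exactly its mutual information.

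Two small remarks. Your argument does not actually need positive marginals, since $a_rb_s=0$ already forces $K_{rs}=0$ for feasible $K$. Cosmetically, the second and third terms evaluate to $-\sum_r a_r\log a_r$ and $-\sum_s b_s\log b_s$, though your final formula has the signs right.

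What the Lagrangian route buys in return is the exponential-family form of the maximizer, which carries over when linear moments are added. For example, adding a home-share constraint and a mean log-distance constraint to the margins yields a maximizer $A_rB_s\exp\{\eta\mathbf{1}\{r=s\}-\tau\log(1+d_{rs})\}$. That is the rebalanced gravity kernel of equation~\eqref{eq:dgp_kernel}.
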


This makes conditional independence a useful benchmark, not an identified
coupling. It rules out home bias, distance decay, corridor structure, and other
dependence between supplier and buyer locations after conditioning on supplier
industry.

\label{subsec:dgp_ci_bias}
We also use a structured gravity completion as a point comparison. For each
supplier industry \(i\), the unbalanced kernel is
\begin{equation}
\label{eq:dgp_kernel}
\widetilde K^i_{rs}(\eta,\tau)
=
a^i_r b^i_s
\exp\left\{
\eta \mathbf{1}\{r=s\}
-\tau \log(1+d_{rs})
\right\}.
\end{equation}
It is rebalanced by iterative proportional fitting to match the maintained
margins.

\label{sec:workflow}
The empirical analysis uses different restrictions by sector. For
shipment-covered sectors, we estimate a gravity-style point completion,
\begin{equation}
\label{eq:gravity}
K^i_{rs}
\propto
\exp\left\{
\alpha^i_r+\delta^i_s+\eta_i\mathbf{1}\{r=s\}
-\tau_i \log(1+d_{rs})
\right\},
\end{equation}
where origin and destination fixed effects absorb the marginals. For mixed
sectors, we pool spatial parameters. For local sectors, we impose support
restrictions rather than treating shipment-like observations as sectoral
intermediate-input flows. The baseline local support includes same-state
pairs, adjacent-state pairs, and the minimum additional nearest-state radius
needed for feasibility.

Let \(\mathcal S_i\) be a support set and let
\(\mathcal K_i(a^i,b^i,\mathcal S_i)\) be the nonnegative matrices that match
the margins and place zero mass outside \(\mathcal S_i\). For any linear
exposure functional \(L(K^i)=\sum_{r,s}\ell_{rs}K^i_{rs}\), sharp bounds are
\begin{equation}
\label{eq:transport_bounds}
\underline L_i
=
\min_{K^i\in\mathcal K_i(a^i,b^i,\mathcal S_i)}
L(K^i),
\qquad
\overline L_i
=
\max_{K^i\in\mathcal K_i(a^i,b^i,\mathcal S_i)}
L(K^i).
\end{equation}
These are sharp under the maintained support restriction because the feasible
set is exactly the transport polytope matching \(a^i\), \(b^i\), and
\(\mathcal S_i\).

All restrictions are fixed before propagation outcomes are inspected. This
differs from a standard regional IO construction because a completed
matrix can fit shipment geography and still reveal little about
local-shock incidence.

\section{U.S. State-Sector Evidence}
\label{sec:implementation}

We next compare common spatial-kernel restrictions with observed shipment
geography. The empirical application uses U.S. states and a sixteen-sector
aggregation. The national input-output block provides the industry shares
\(\omega_{ji}\) and the intermediate-input intensities \(\mu_j\). The baseline
origin margins, destination margins, and exposure weights use 2019 QCEW
wage-bill shares. The destination marginal is constructed from destination
activity, sector intermediate-input intensities, and the national IO matrix,
as in equation~\eqref{eq:intermediate_destination_marginal}. It is a proxy
for intermediate-demand geography, not regional final expenditure. Wage bills
are the baseline because they are a public, consistently available measure of
state-sector economic activity and they weight high-productivity state-sector
cells more than headcount alone. QCEW employment shares are the first
robustness margin, and model-output margins are reported only in
robustness. State-to-state shipment flows come from the
2017 Commodity Flow Survey. State distances are computed from Census state
centroids.

We therefore interpret the empirical inputs as maintained proxy measures.
The spatial kernel is the unknown intermediate-input coupling.
Conditional independence is the proportional benchmark. Structured gravity
completion is a low-dimensional point completion estimated from shipment
geography. One-step exposure is an accounting exposure measure before
employment responses, price responses, and welfare effects.

\begin{table}[htbp]
\centering
\caption{Sector Treatment in the Main Specification}
\label{tab:sector_classification}
\normalsize
\setlength{\tabcolsep}{4pt}
\begin{tabular}{@{}p{0.22\textwidth}p{0.38\textwidth}p{0.32\textwidth}@{}}
\toprule
Group & Sectors & Treatment \\
\midrule
Shipment-covered & Manufacturing, mining, transportation, wholesale & CFS moment and selected bilateral cell bands \\
Tradable pooled & Agriculture & Pooled tradable parameters \\
Mixed pooled & Utilities, information, finance, insurance, real estate, professional and business services & Pooled mixed-sector parameters \\
Local support-restricted & Construction, retail, education, health care, leisure, other services, government & Support bounds for local sourcing \\
\bottomrule
\end{tabular}
\begin{flushleft}
\footnotesize Notes: The classification is fixed before propagation outcomes are interpreted. Retail is treated as local in the main specification because retail output is conceptually closer to local service provision than to an interregional supplier sector. Appendix Table~\ref{tab:retail_robustness} reports a shipment-informed retail variant.
\end{flushleft}
\end{table}

\subsection{Why shipment data do not observe the coupling}

CFS moments are auxiliary shipment moments, not the full intermediate-input
coupling. The distinction matters for interpretation. First, CFS
shipments include final goods as well as goods that may become intermediate
inputs. Second, CFS commodity classifications do not map one-for-one into
the production industries in the IO table. Third, wholesale shipments and
re-shipments can break the link between the producing origin and the
intermediate-input seller relevant for a buyer. Fourth, services are missing
or weakly covered, which matters because many local and business-service
sectors are large in the IO block. Finally, CFS observes goods movement. It
does not observe the buyer-seller use of intermediate inputs by destination
industry. For this reason, the CFS restrictions below restrict shipment
geography within selected sectors, but they do not convert the spatial kernel
into an observed matrix.

We use point completions only as comparisons. Conditional
independence is the proportional benchmark. Structured gravity, with pooled
variants where data are thin, is the shipment-informed comparison. For local
sectors, the relevant calculation is not a point completion but the support-restricted
admissible set. Lower and upper support-bound kernels summarize feasible
ranges. All comparisons hold fixed the same national IO block and the same
intermediate-demand marginals. They differ only in the intermediate-input
spatial coupling.

Figure~\ref{fig:home_bias_wedge} shows the main empirical fact. In the
four strict shipment-covered sectors, observed state-to-state CFS flows display
large within-state shares. The average observed CFS home share is
\HomeBiasObservedMean. Conditional independence implies \HomeBiasCIMean. The
structured gravity completion implies \HomeBiasGravityMean. Thus conditional
independence does not merely smooth bilateral flows. It removes most of the
same-state mass observed in shipment data. Appendix
Table~\ref{tab:home_bias_wedge} reports the corresponding home-share and
held-out RMSE values.

\begin{figure}[htbp]
\centering
\includegraphics[width=0.82\textwidth]{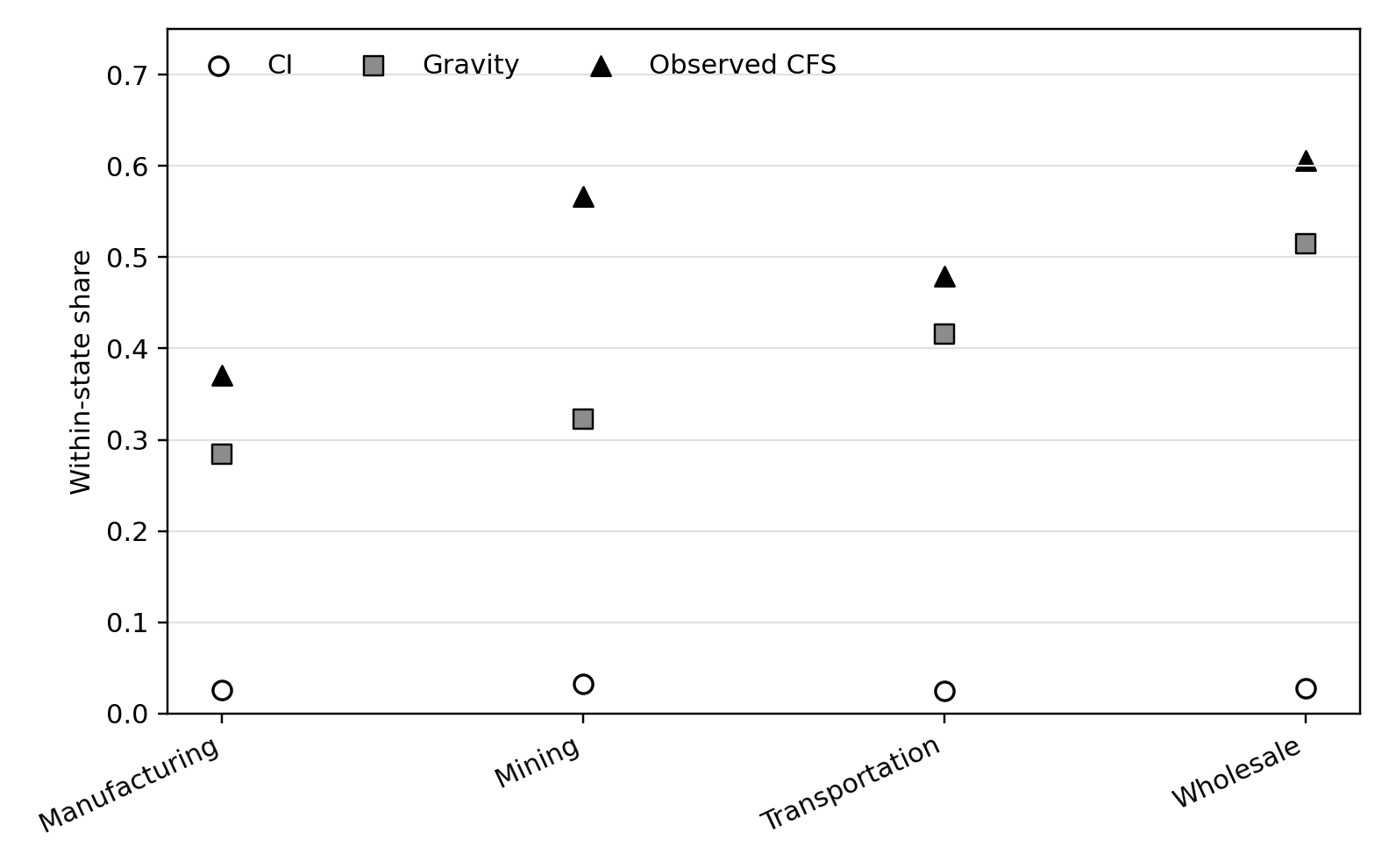}
\caption{Home Bias in Shipment-Covered Kernel Restrictions}
\label{fig:home_bias_wedge}
\begin{flushleft}
\footnotesize Notes: The figure reports within-state shipment or kernel shares for the four strict shipment-covered sectors. CFS is the observed within-state shipment share. CI is the proportional conditional-independence completion. Gravity is the structured spatial completion. The interpretation of CFS moments is described in Section~\ref{sec:implementation}.
\end{flushleft}
\end{figure}

Table~\ref{tab:admissibility_frontier} reports admissibility frontiers using
the same evidence. For each restriction, we ask how much the CFS moment restrictions must be relaxed before that restriction becomes admissible. A larger tolerance means the restriction is farther from the shipment evidence. Formally, for a restriction \(R\), let
\[
\tau_R=\inf\{\tau:m(K_R)\in\mathcal M(\tau)\}
\]
be the smallest tolerance under which the restriction is admissible for the maintained moment. For the shipment-covered sectors, the moments are the CFS home share, the CFS distance-bin distribution, and held-out flow fit. Conditional independence requires an average home-share tolerance of
\CIHomeFrontierMean, while sector gravity requires \GravityHomeFrontierMean.
The corresponding distance-bin total variation gaps are
\CIFrontierDistanceTV{} and \GravityFrontierDistanceTV. This should not be read as a formal statistical rejection of conditional independence, because it
does not use CFS sampling variances. It shows that conditional independence is admissible only under a much looser moment set than sector gravity.

\begin{table}[htbp]
\centering
\caption{Admissibility Frontier for Shipment-Covered Kernel Restrictions}
\label{tab:admissibility_frontier}
\normalsize
\begin{tabular}{lcccc}
\toprule
Restriction & Home gap & Distance TV & Avg. dist. gap & RMSE/width \\
\midrule
Conditional independence & 0.451 & 0.578 & 784.0 & 0.0120 \\
Sector gravity & 0.066 & 0.123 & 126.0 & 0.0061 \\
Support restriction & -- & -- & -- & 0.714 \\
\bottomrule
\end{tabular}
\begin{flushleft}
\footnotesize Notes: The table reports admissibility-frontier comparisons for the four strict shipment-covered sectors. Home gap is the mean absolute gap between
the completion-implied home share and the observed CFS home share. Distance TV is the mean total variation distance between the completion-implied and observed CFS distance-bin distributions. Avg. dist. gap is the mean absolute
gap in average shipment distance, in miles. RMSE is held-out normalized flow RMSE.
\end{flushleft}
\end{table}

Local support restrictions are different from the shipment-covered
point-completion comparisons in Table~\ref{tab:admissibility_frontier}. They define feasible sets rather than CFS flow-fit statistics. Appendix Figure~\ref{fig:local_bounds} reports the corresponding local-sector home-share intervals. The mean interval width is 0.714.

The gap is economically large. The proportional completion does more than smooth flows at the margin. It almost eliminates home bias in the sectors where shipment data show home bias most clearly. In manufacturing, observed CFS flows imply a within-state share of \ManufacturingObservedHomeShare{}, while the CI completion implies \ManufacturingCIHomeShare{}. In wholesale, the corresponding numbers are \WholesaleObservedHomeShare{} and
\WholesaleCIHomeShare{}. Gravity is closer on both home shares and held-out flow fit, but it remains a maintained low-dimensional restriction rather than an identified intermediate-input spatial kernel.

This distinction motivates the sector treatment in
Table~\ref{tab:sector_classification}: sector-specific CFS restrictions for
shipment-covered sectors, pooled parameters where shipment evidence is thin,
and support-restricted bounds for local sectors.

\section{Identified-Set Exposure Results}
\label{sec:experiments}

We now report sharp bounds for one-step exposure over the admissible set of
intermediate-input kernels. The bounds are not ranges across selected point
completions. They are the minimum and maximum exposure values attainable by
any kernel satisfying the maintained proxy margins, support restrictions, and
CFS moment bands. The calculations hold fixed the national IO table,
intermediate-demand proxy margins, sectoral intermediate-input intensities,
and state-sector wage-bill exposure weights. Only the intermediate-input
spatial coupling changes.

The results show that the public data determine some incidence
comparisons but leave many state rankings unresolved. The calibrated
economy has 51 states and 16 sectors. For each coupling \(K\), we construct
the regional input matrix in equation~\eqref{eq:macro_A} using
\(B_{ji}=\mu_j\omega_{ji}\), with
\(\mu_j=\text{intermediate}_j/\text{output}_j\). The maximum spectral radius
across all calibrated and comparator matrices is \MaxEmpiricalSpectralRadius{}, so the accounting
inverse exists in every reported case.

\subsection{Sharp one-step bounds for the Gulf regional shock}

We first study a Gulf regional shock that hits
Louisiana and Mississippi in all supplier sectors. This shock has geographic
content, so the spatial kernel matters for where exposure lands. At the same
time, the exact-margin aggregate target has a cancellation property. Because
the aggregate weights and destination margins use the same wage-bill proxy,
the unknown bilateral kernel collapses to maintained origin margins after
destination aggregation. The zero aggregate width in the exact-margin
baseline should therefore not be read as evidence that bilateral sourcing is
precisely identified. We therefore focus on state-level incidence.

Table~\ref{tab:information_content} reports how the Gulf exposure set changes
as information is added. The first four rows add exact proxy margins,
local-sector support restrictions, CFS home-share bands, and CFS distance-bin
bands. The fifth row adds selected bilateral CFS cell bands for shipment-
covered sectors. The final two rows replace exact wage-bill margins with
bands whose lower and upper endpoints are the QCEW wage-bill and employment
shares. The mean home-share feasibility tolerance is \MeanMomentHomeTau, and
the maximum is \MaxMomentHomeTau. Appendix Table~\ref{tab:moment_tolerances}
reports the sector-specific tolerances.

\begin{table}[htbp]
\centering
\caption{Information Content of Spatial Restrictions}
\label{tab:information_content}
\normalsize
\resizebox{\textwidth}{!}{\begin{tabular}{lccccc}
\toprule
Admissible set & Agg. interval & Agg. width & Median width & P90 width & Possible top \\
\midrule
Margins only & [0.0066, 0.0066] & 0 & 0.00474 & 0.00658 & 51 \\
Local support & [0.0066, 0.0066] & 0 & 0.00312 & 0.00421 & 51 \\
CFS home band & [0.0066, 0.0066] & 0 & 0.00309 & 0.00419 & 51 \\
CFS distance-bin band & [0.0066, 0.0066] & 0 & 0.00309 & 0.00419 & 51 \\
CFS bilateral cell bands & [0.0066, 0.0066] & 0 & 0.00272 & 0.00340 & 51 \\
\midrule
Banded margins & [0.0065, 0.0085] & 0.00197 & 0.00388 & 0.00553 & 51 \\
Banded margins + CFS bilateral & [0.0066, 0.0085] & 0.00189 & 0.00364 & 0.00470 & 51 \\
\bottomrule
\end{tabular}}
\begin{flushleft}
\footnotesize Notes: The table reports sharp one-step exposure bounds for the
Gulf regional shock. Agg. interval and Agg. width refer to the
wage-bill-weighted aggregate one-step exposure interval. Median width and P90
width summarize state-level exposure intervals. Possible top is the number of
states that can be in the top exposure decile for some admissible kernel,
using conservative interval classification. The banded-margin rows are a
separate admissible-set layer, not a nested refinement of the exact wage-bill
rows. No row has a robust top-decile state.
\end{flushleft}
\end{table}

The top-decile classification is deliberately conservative, so we also solve
sharp pairwise bounds. For each unordered state pair, we bound \(E_s-E_{s'}\).
One state is classified as dominating the other only when the sharp lower
bound is positive or the sharp upper bound is negative. Under the final
exact-margin bilateral-CFS set, only eight of 1,275 state pairs are
determined, or \InfoPairwiseDeterminedPct{} percent. Only one state has any
robust dominance relation. The determined share among pairs involving the 20
states with the largest upper endpoints is
\InfoPairwiseTopTwentyPct{} percent. This pattern is visible in Figure
\ref{fig:gulf_state_intervals}: except for Louisiana, most lower endpoints
among high-upper-bound states remain close to zero.

Figure~\ref{fig:gulf_state_intervals} shows why many rankings remain
unresolved. Most lower endpoints among the high-upper-bound states are close
to zero, and many intervals overlap even after CFS restrictions. The figure
also shows that conditional independence is outside the final sharp interval
for one directly shocked state. For Louisiana, conditional independence gives
exposure \GulfLouisianaCIPoint{}, below the sharp lower endpoint
\GulfLouisianaLower{}. Structured gravity remains inside all displayed
intervals.

\begin{figure}[htbp]
\centering
\includegraphics[width=0.86\textwidth]{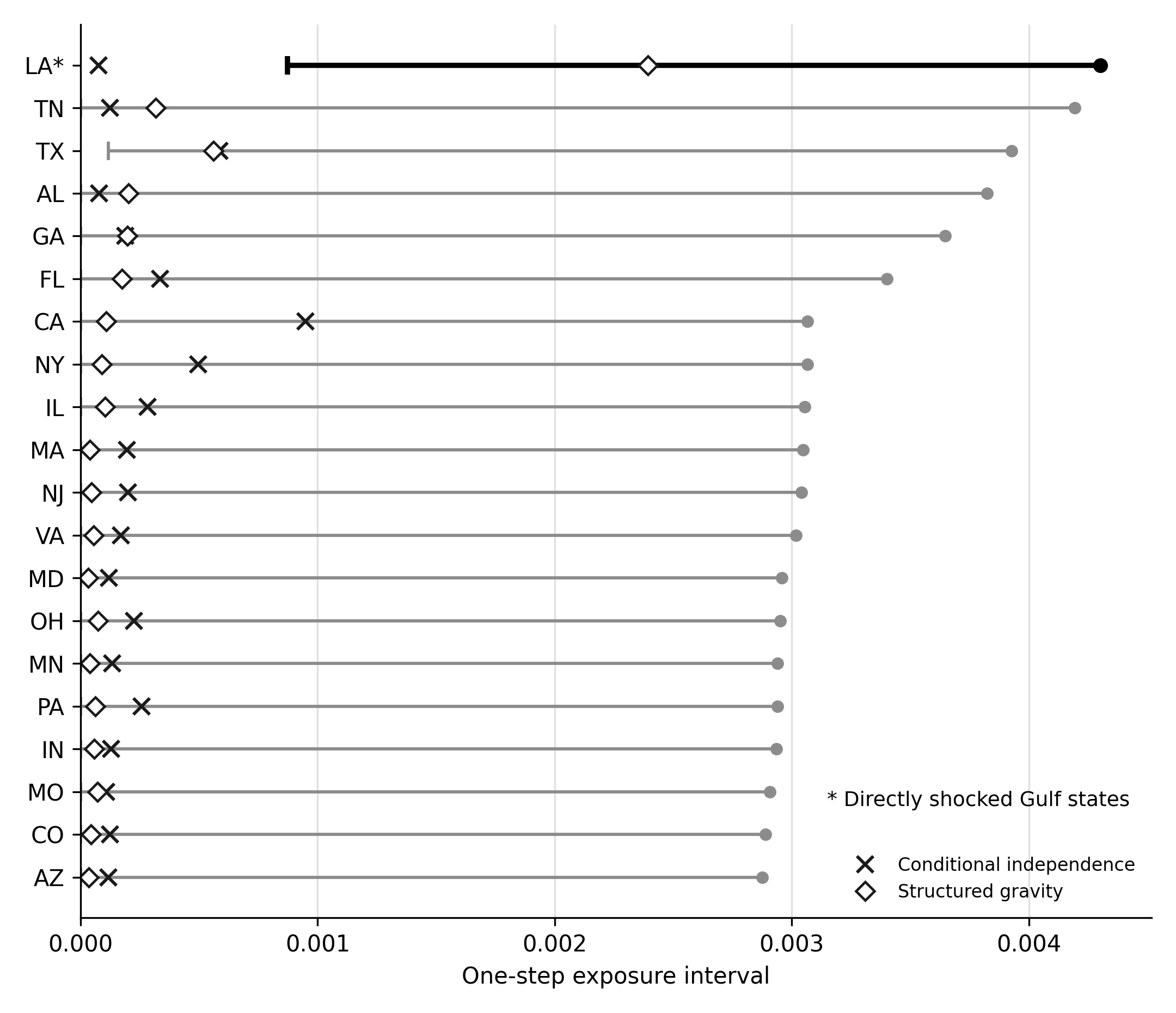}
\caption{Top Gulf State Exposure Intervals}
\label{fig:gulf_state_intervals}
\begin{flushleft}
\footnotesize Notes: Horizontal lines report the sharp one-step Gulf regional-
shock exposure interval for the 20 states with the largest upper endpoint.
States are sorted by the upper endpoint. Crosses mark conditional
independence, and diamonds mark structured gravity. Asterisks mark Louisiana
and Mississippi. Louisiana has a strictly positive lower endpoint, and its
conditional-independence point lies below that lower bound. Most other lower
endpoints are close to zero.
\end{flushleft}
\end{figure}

The sector decomposition in Table~\ref{tab:sector_width_decomposition}
explains where residual uncertainty comes from. The decomposition is exact:
the state exposure interval width is the sum of sector-level endpoint
differences because the transport problem separates by supplier sector.
FIRE, professional and business services, manufacturing, wholesale, and
transportation are the largest contributors to median state width. Grouping
the sectors shows that services and other pooled sectors account for
\ServicesOtherWidthSharePct{} percent of total state interval width. FIRE and
professional and business services alone account for \FireWidthSharePct{} and
\PbsWidthSharePct{} percent. CFS-covered goods and logistics account for
\CfsGoodsLogisticsWidthSharePct{} percent. The residual uncertainty is
therefore not only a problem of missing goods-shipment cells. It also reflects
large service and mixed sectors whose intermediate-input geography is not
directly observed in CFS.

\begin{table}[htbp]
\centering
\caption{Sector Decomposition of Gulf Interval Width}
\label{tab:sector_width_decomposition}
\normalsize
\resizebox{\textwidth}{!}{\begin{tabular}{llccc}
\toprule
Supplier sector & Group & Median contribution & P90 contribution & Mean share (\%) \\
\midrule
FIRE & services and other pooled sectors & 0.00076 & 0.00076 & 26.5 \\
Prof. Business Services & services and other pooled sectors & 0.00059 & 0.00059 & 19.4 \\
Manufacturing & CFS-covered goods and logistics & 0.00057 & 0.00057 & 21.1 \\
Wholesale & CFS-covered goods and logistics & 0.00028 & 0.00039 & 10.5 \\
Transportation & CFS-covered goods and logistics & 0.00014 & 0.00034 & 6.9 \\
Information & services and other pooled sectors & 0.00010 & 0.00010 & 4.0 \\
Utilities & services and other pooled sectors & 0.00008 & 0.00015 & 3.5 \\
Agriculture & services and other pooled sectors & 0.00004 & 0.00006 & 1.6 \\
\bottomrule
\end{tabular}}
\begin{flushleft}
\footnotesize Notes: Contributions are sector-level differences between the
upper and lower exposure endpoint for each state, aggregated over states. The
table lists the largest sector contributors by median state contribution.
\end{flushleft}
\end{table}

The bilateral CFS cells lower the median state width by
\InfoBilateralMedianWidthReductionPct{} percent, so they add information, but
the gain is modest relative to the remaining interval width. Appendix
Table~\ref{tab:binding_bilateral_cfs_cells} reports which selected bilateral
cells bind in the Gulf exposure endpoint problems.

The banded-margin rows in Table~\ref{tab:information_content} evaluate the
role of exact wage-bill margins. In those rows, each
sector's total mass remains one, but origin and destination shares are allowed
to range between the QCEW wage-bill and employment shares. Under banded
margins with bilateral CFS restrictions, aggregate exposure width is
\InfoBandedAggregateWidth{} and median state width is
\InfoBandedMedianWidth{}. These rows should be interpreted as a separate
admissible-set layer, not as a nested refinement of the exact wage-bill
baseline.

Appendix Table~\ref{tab:moment_band_bounds} reports the CFS moment-band
calculation. Appendix Table~\ref{tab:alternative_margin_identified_sets}
repeats the identified-set calculation with wage-bill, employment, mixed, and
model-output margins. The model-output row is a robustness specification rather than
the baseline. Across those margin constructions, all states remain possible
top-decile exposure states. Appendix Table~\ref{tab:mining_robustness} also
shows that excluding mining CFS moments leaves the median and p90 state
interval widths at \MiningExcludedMedianWidth{} and
\MiningExcludedPninetyWidth{}. The exact-margin aggregate cancellation is
therefore a property of the maintained aggregate target. Weak identification
of regional incidence is not.

Appendix Table~\ref{tab:dual_shadow_prices} reports the largest nonzero LP
moment shadows for state-level Gulf exposure bounds. The rows are not
whole-state exposure endpoints. They are sector-specific dual results inside
the state-level bounds. They show which CFS moment restrictions bind
particular sector-level endpoints, while the maximum primal-dual gap in the
full dual output is only \DualMaxGap{}.

Nonlinear Leontief multiplier exposure is harder to bound sharply. Appendix
Table~\ref{tab:nonlinear_multiplier_outer_bounds} reports the perturbation
outer-bound calculation. In the Gulf application, the first-order LP interval
is much narrower than the final outer interval because the conservative
perturbation radius, \NonlinearPerturbationRadius, is above one. The resulting
outer interval is valid and contains the reported point-completion multiplier
values, but it is too wide to support a sharp nonlinear identified-set claim
in this application. Without additional structure, nonlinear Leontief exposure
is much harder to sharply bound than linear one-step exposure. For this
reason, the empirical results in the main text are interpreted as sharp bounds
for one-step exposure, not for full nonlinear propagation.

\section{Local Shock Exposure Applications}
\label{sec:empirical_applications}

The preceding results imply a simple distinction. A shock can have economic
effects without identifying the spatial kernel. If the shock is
national within a supplier industry, every origin in that industry is hit and
the destination-conditional sourcing shares sum out. A shock with geographic
content is different because the spatial kernel determines where downstream
exposure lands.

Table~\ref{tab:empirical_application_summary} applies this distinction to two
shock designs. The first is a Katrina-style Gulf regional shock:
\[
z_{ri}^{Gulf}
=
\mathbf{1}\{r\in\{\mathrm{LA},\mathrm{MS}\}\}.
\]
The shock hits Louisiana and Mississippi in every supplier sector. Its
geographic content is sharp while its industry content is broad. The second is a national manufacturing-input shock:
\[
z_{ri}^{Mfg}
=
\mathbf{1}\{i=\mathrm{manufacturing}\}.
\]
It is best interpreted here as an industry-shock limiting case rather than a
detailed tariff counterfactual.

\begin{table}[htbp]
\centering
\caption{Gulf and Manufacturing Shock Comparisons}
\label{tab:empirical_application_summary}
\normalsize
\begin{tabular}{lrrrrr}
\toprule
Application & Domar error & Regional TV & Rank corr. & Top overlap & Loss ratio \\
\midrule
Gulf regional & 0.008 & 0.137 & 0.887 & 0.598 & 0.860 \\
Manufacturing industry & 0 & 0 & 1.000 & 1.000 & 1.000 \\
\bottomrule
\end{tabular}

\begin{flushleft}
\footnotesize Notes: The table compares conditional independence with the
structured-gravity completion. Domar error is the absolute difference in
output-weighted accounting loss. Regional TV compares state-level loss
allocations. Top overlap compares the top decile of state-sector losses.
\end{flushleft}
\end{table}

For the Gulf regional shock, the point-completion aggregate difference is not
negligible. The Domar-weighted loss under conditional independence is
\GulfLossRatio{} of the structured-gravity loss. The regional allocation also
changes substantially: state-level allocation TV is \GulfRegionalTV{}, and
top-decile overlap is \GulfTopOverlap{}. Thus the proportional completion
changes both aggregate accounting loss and which downstream places are
classified as highly exposed. For the national manufacturing shock, all
metrics are invariant up to numerical precision, as predicted by
Proposition~\ref{prop:shock_cancellation}.

The ranking changes are most visible outside the directly shocked states.
Louisiana and Mississippi remain the two most exposed destination states under
both completions. But Arkansas is ranked \GulfArkansasGravityRank{} under
structured gravity and \GulfArkansasCIRank{} under conditional independence.
California moves from rank \GulfCaliforniaGravityRank{} to
\GulfCaliforniaCIRank{}, and New York moves from rank
\GulfNewYorkGravityRank{} to \GulfNewYorkCIRank{}, under conditional
independence. These are point-completion comparisons, not sharp identified
rankings, but they show why regional incidence cannot be inferred from
aggregate exposure alone.

\section{Conclusion}
\label{sec:conclusion}

Regional production-network calculations require assumptions about who buys
from whom across space. Standard data provide national industry IO tables and
regional sectoral activity, but they do not observe the bilateral
state-sector buyer-seller matrix. This paper shows that those data identify
an admissible set of intermediate-input spatial kernels, not a unique regional
network or a full regional IO table. A completed regional IO matrix should
therefore be interpreted as a maintained restriction on the missing coupling,
not as an observed input.

This distinction matters because the missing spatial kernel is not equally
relevant for all propagation questions. Pure national industry shocks are
invariant to the kernel and therefore provide a limiting case for
regionalization assumptions. Local regional and region-sector shocks are
different because the kernel determines where downstream exposure lands. For
linear one-step exposure, the admissible set delivers sharp
transportation-program bounds. For nonlinear Leontief accounting exposure,
the same problem becomes a nonlinear identified-set problem. The perturbation
calculation provides conservative outer bounds rather than sharp global
bounds.

The U.S. state-sector application shows that the issue is quantitatively
relevant. CFS shipment moments imply substantially more home bias and distance
concentration than conditional independence, so the proportional completion is
too spatially diffuse in shipment-covered sectors. At the same time, these
moments do not identify the full intermediate-input coupling or the
state-level incidence of local shocks. For a Gulf regional shock, selected
bilateral CFS cells narrow the state exposure intervals, but they determine
few state-pair rankings. Banded wage-bill and employment margins remove the
exact-margin aggregate cancellation, yet state-level rankings remain weakly
identified.

These bounds can restrict quantitative spatial models, but they do not
replace them. A full counterfactual still requires final-demand sourcing,
household-demand geography, behavioral elasticities, factor adjustment,
financing, and market clearing. The point is to separate the
intermediate-input spatial network features supported by proxy margins and
auxiliary shipment moments from those imposed by regionalization assumptions.
Rather than evaluating a structural model at a single proportional
regionalization, researchers can ask whether its exposure and counterfactual
conclusions survive over the admissible set of spatial kernels.

\clearpage
\onehalfspacing
\bibliographystyle{aer}
\bibliography{state_spatial_kernel_refs}

\clearpage
\appendix

\section{Proofs}
\label{app:proofs}

\begin{proof}[Proof of Theorem \ref{thm:generic_lp_dual}]
The feasible set is a closed and bounded polytope because the variables are nonnegative and their total mass is fixed by the margins. A linear functional on this set attains its minimum and maximum, and its image is an interval. Sharpness follows because the admissible set contains exactly the couplings satisfying the maintained margins, support, and moment bands. The primal lower-bound problem is equation \eqref{eq:generic_primal_min}. The dual is the standard linear-programming dual for a minimization problem with equality constraints, nonnegative variables, and upper-bound inequalities. With the convention \(Gk\leq\overline m\) and \(-Gk\leq-\underline m\), the inequality multipliers are nonpositive, which gives equation \eqref{eq:generic_dual}. Feasibility and boundedness imply strong duality.
\end{proof}

\begin{proof}[Proof of Proposition \ref{prop:nested_information}]
If \(\mathcal M_1\subseteq\mathcal M_0\), then \(\mathcal A(\mathcal M_1)\subseteq\mathcal A(\mathcal M_0)\). Minimizing over a smaller feasible set cannot produce a lower value, and maximizing over a smaller feasible set cannot produce a higher value. This gives the stated inequalities.
\end{proof}

\begin{proof}[Proof of Proposition \ref{prop:moment_band_coverage}]
On the event that all true moments lie inside their bands, the true coupling \(K_0\) belongs to \(\mathcal A_\alpha\) because it also satisfies the maintained margins and support restrictions. The reported interval is the minimum and maximum of \(T(K)\) over a set that contains \(K_0\), so it contains \(T(K_0)\). The probability of that event is at least \(1-\alpha\).
\end{proof}

\begin{proof}[Proof of Proposition \ref{prop:nonidentification}]
Fix an industry pair \((i,j)\). The maintained origin and destination marginals impose \(2R-1\) independent linear constraints on the \(R^2\) entries of a nonnegative matrix \(K^{ij}\). If \(R\geq 2\), the number of unknown entries exceeds the number of independent equality constraints. When the marginals are nondegenerate, the feasible transport polytope has positive dimension. Therefore there is more than one matrix \(K^{ij}\) that matches the same marginals. Since \(W_{(r,i),(s,j)}=\omega_{ji}K^{ij}_{rs}/b^{ij}_s\) for destinations with positive \(b^{ij}_s\), each such matrix implies a different regional network while preserving the same national IO coefficients and regional marginals. The same argument applies to the restricted supplier-sector coupling \(K^i\) after replacing \((a^{ij},b^{ij})\) with \((a^i,b^i)\).
\end{proof}

\begin{proof}[Proof of Proposition \ref{prop:admissible_exposure}]
Substitute equation \eqref{eq:general_exposure} into \(Q(E)\):
\[
Q(E^K(z))
=
\sum_{s,j}q_{sj}\sum_i B_{ji}\sum_r \frac{K^i_{rs}}{b^i_s}z_{ri}
=
\sum_i\sum_{r,s}
\left(
\frac{z_{ri}}{b^i_s}\sum_j q_{sj}B_{ji}
\right)K^i_{rs}.
\]
Thus \(Q(E^K(z))\) is a linear functional of the entries of the sectoral kernels. The continuous image of a compact convex set under a linear functional is a compact convex subset of \(\mathbb R\), and hence an interval. The lower and upper endpoints are attained by Weierstrass' theorem. If the maintained restrictions are linear equalities, linear inequalities, or support restrictions, each \(\mathcal A_i(\mathcal M_i)\) is a polytope. Because the product set is separable across supplier sectors and the objective is additively separable, optimizing over \(\mathcal A(\mathcal M)\) is equivalent to optimizing the sector-specific linear objectives over the corresponding transportation polytopes and summing the endpoints. These bounds are sharp because every feasible point satisfies exactly the maintained restrictions and no other restrictions are imposed.
\end{proof}

\begin{proposition}[Multiplier sensitivity to kernel restrictions]
\label{prop:multiplier_sensitivity}
Suppose \(\rho(A^K)<1\). For a perturbation \(dK\) that induces \(dA^K\), the
first-order change in Leontief exposure is
\begin{equation}
\label{eq:multiplier_sensitivity}
d(M^Kz)
=
M^K(dA^K)'M^Kz.
\end{equation}
Thus, first-order multiplier sensitivity can be computed from the baseline
inverse and the matrix perturbation induced by the kernel perturbation.
\end{proposition}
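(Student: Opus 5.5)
The plan is to differentiate the matrix inverse \(M^K=(I-(A^K)')^{-1}\) along a smooth path of admissible kernels, and then apply the result to the fixed shock vector \(z\). The starting point is the identity \(M^K(I-(A^K)')=I\), which holds whenever \(\rho(A^K)<1\). By continuity of the spectral radius, \(\rho(A^{K+t\,dK})<1\) for all sufficiently small \(t\), so the inverse exists along the perturbation path. Since \(A^K\) is built from \(K\) through \(\pi^i_{r|s}=K^i_{rs}/b^i_s\) with fixed margins \(b^i_s\), the map \(K\mapsto A^K\) is linear. Hence \(dK\) induces \(dA^K=A^{dK}\), and in particular \(A^{K+t\,dK}=A^K+t\,dA^K\).

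The key step is the standard resolvent-type identity. Write \(N_t=I-(A^K+t\,dA^K)'\), so \(N_t=N_0-t(dA^K)'\). Then
\[
N_t^{-1}-N_0^{-1}=N_t^{-1}(N_0-N_t)N_0^{-1}=t\,N_t^{-1}(dA^K)'N_0^{-1}.
\]
Dividing by \(t\) and letting \(t\to0\) requires continuity of \(N_t^{-1}\), which follows from continuity of matrix inversion on invertible matrices. This gives \(dM^K=M^K(dA^K)'M^K\). Multiplying on the right by the fixed \(z\) yields \(d(M^Kz)=M^K(dA^K)'M^Kz\), which is the displayed formula.

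An alternative route, which I would mention as a check, uses the Neumann series \(M^K=\sum_{n\ge0}((A^K)')^n\). Differentiating term by term gives \(\sum_{n}\sum_{p+q=n-1}((A^K)')^p(dA^K)'((A^K)')^q\). Re-summing the double series over \(p\) and \(q\) recovers \(M^K(dA^K)'M^K\). This route requires absolute convergence, which holds because \(\rho<1\) gives geometric decay in a suitable norm.

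The only real obstacle is bookkeeping rather than analysis: the transpose convention and the order of the factors. The perturbation of \(I-(A^K)'\) is \(-(dA^K)'\), so the transpose must sit on \(dA^K\) and the two inverses must flank it. A secondary point is that admissible perturbations keep the margins fixed, so \(b^i_s\) does not vary and no quotient-rule terms appear in \(dA^K\). If the margins were also perturbed, as in the banded-margin layer, I would note that \(dA^K\) then includes the extra term \(-B_{ji}K^i_{rs}\,db^i_s/(b^i_s)^2\). The formula in terms of \(dA^K\) is unchanged in that case. The closing sentence of the proposition follows immediately, since the right-hand side involves only the baseline inverse \(M^K\), the perturbation \(dA^K\), and \(z\).
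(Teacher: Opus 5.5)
Your proposal is correct and takes essentially the same route as the paper. The paper differentiates \(M^K=(I-(A^K)')^{-1}\) to obtain \(dM^K=M^K(dA^K)'M^K\) and then multiplies by the fixed \(z\); your resolvent-identity limit argument, together with the continuity and linearity remarks, simply spells out the standard matrix-inverse differential that the paper invokes directly.
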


\begin{proof}[Proof of Proposition \ref{prop:multiplier_sensitivity}]
Let \(M^K=(I-(A^K)')^{-1}\). The differential of a matrix inverse gives
\[
dM^K
=
M^K d(A^K)' M^K.
\]
Multiplying by the fixed shock vector \(z\) gives equation \eqref{eq:multiplier_sensitivity}. The formula is local because the inverse maps \(A^K\) into \(M^K\) nonlinearly.
\end{proof}

\begin{proposition}[Outer bounds for multiplier exposure]
\label{prop:certified_multiplier_bounds}
Fix a feasible baseline kernel \(K_0\), let \(A_0=A^{K_0}\), and let
\[
M_0=(I-A_0')^{-1}.
\]
For any admissible \(K\), write \(\Delta A=A^K-A_0\). Then
\begin{equation}
\label{eq:multiplier_expansion}
M^Kz
=
M_0z
+
M_0(\Delta A)'M_0z
+
\mathrm{Rem}_M(K;z),
\end{equation}
where
\begin{equation}
\label{eq:multiplier_remainder}
\mathrm{Rem}_M(K;z)
=
M_0(\Delta A)'M_0(\Delta A)'M^Kz .
\end{equation}
For any scalar weight vector \(q\), the first-order term
\[
q'M_0(\Delta A)'M_0z
\]
is linear in \(K\). Its extrema over \(\mathcal A(\mathcal M)\) are therefore
LP bounds. If
\[
\delta\geq\|(\Delta A)'\|_\infty,\qquad
r\geq\|(\Delta A)'M_0\|_\infty,\qquad
\overline M\geq\|M^K\|_\infty
\]
hold for all admissible \(K\), then
\begin{equation}
\label{eq:multiplier_remainder_bound}
|q'\mathrm{Rem}_M(K;z)|
\leq
\|q'M_0\|_1\, r\,\delta\,\overline M\,\|z\|_\infty .
\end{equation}
Adding this remainder radius to the LP bounds for the first-order term gives
an outer interval for \(q'M^Kz\). The interval is conservative and is not
claimed to be sharp.
\end{proposition}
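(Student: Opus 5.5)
Writing $M^K$ and $M_0$ as Neumann-series inverses, I will use the second resolvent identity twice. With $N_K=I-(A^K)'$ and $N_0=I-A_0'$ we have $N_0-N_K=(\Delta A)'$. This gives
\[
M^K-M_0=M_0(N_0-N_K)M^K=M_0(\Delta A)'M^K .
\]
Substituting the same identity once more for the $M^K$ on the right, $M^K=M_0+M_0(\Delta A)'M^K$, yields
\[
M^K=M_0+M_0(\Delta A)'M_0+M_0(\Delta A)'M_0(\Delta A)'M^K .
\]
Multiplying by $z$ gives equation~\eqref{eq:multiplier_expansion} with the remainder in equation~\eqref{eq:multiplier_remainder}. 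Invertibility of $N_K$ is needed here, so I will assume $\rho(A^K)<1$ for every admissible $K$, as in Proposition~\ref{prop:multiplier_sensitivity}. Alternatively, the existence of $\overline M$ can be read as implicitly requiring it.

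For linearity of the first-order term, I will note that $A^K_{(r,i),(s,j)}=B_{ji}K^i_{rs}/b^i_s$, where the margins $b^i_s$ are fixed. Hence $\Delta A$ is affine in $K$, and $q'M_0(\Delta A)'M_0z$ is an affine functional of the vectorized kernel. Its minimum and maximum over the admissible polytope are therefore LP endpoints, by Theorem~\ref{thm:generic_lp_dual} and Proposition~\ref{prop:admissible_exposure}. If the margins are banded rather than exact, $b^i_s$ is no longer fixed. I will restrict the claim to fixed destination margins, which is the exact-margin setting in which the LP statement holds.

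For the remainder bound, I will apply H\"older's inequality and then submultiplicativity of the induced $\infty$-norm:
\[
|q'\mathrm{Rem}_M|\le \|q'M_0\|_1\,\|(\Delta A)'M_0(\Delta A)'M^Kz\|_\infty .
\]
Grouping the product as $[(\Delta A)'M_0]\,[(\Delta A)']\,[M^K]\,z$ bounds the second factor by $r\,\delta\,\overline M\,\|z\|_\infty$. This gives equation~\eqref{eq:multiplier_remainder_bound}.

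The outer interval then follows directly. For every admissible $K$,
\[
q'M^Kz\in q'M_0z+[\underline L,\overline L]+[-\rho_R,\rho_R],
\]
where $\underline L$ and $\overline L$ are the LP extrema of the first-order term and $\rho_R$ is the remainder radius. The bound is uniform over $K$ because $\delta$, $r$, and $\overline M$ are uniform bounds. It is not sharp in general, because the first-order extremizer and the remainder extremizer need not coincide.

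I expect the main obstacle to be bookkeeping rather than depth: placing the transposes consistently in the resolvent identity and matching the norm conventions. Specifically, $\|\cdot\|_\infty$ must be the induced row-sum norm for matrices, paired with the dual $\ell_1$ norm on the row vector $q'M_0$.
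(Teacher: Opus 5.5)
Your proof is correct and follows the paper's argument essentially step for step: the resolvent identity $M^K-M_0=M_0(\Delta A)'M^K$ substituted into itself, linearity of $A^K$ in $K$ under fixed margins for the LP step, and H\"older's inequality plus submultiplicativity of the induced $\infty$-norm for the remainder. The paper leaves three points implicit that you make explicit, which is reasonable: $I-(A^K)'$ must be invertible for every admissible $K$, the first-order term is affine rather than linear, and the LP claim relies on exact destination margins.
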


\begin{proof}[Proof of Proposition \ref{prop:certified_multiplier_bounds}]
The resolvent identity gives
\[
M^K-M_0
=
M_0(A^K-A_0)'M^K.
\]
Multiplying by \(z\) and substituting \(M^K=M_0+(M^K-M_0)\) gives equation \eqref{eq:multiplier_expansion}. Since \(A^K\) is linear in \(K\), the scalar first-order term \(q'M_0(A^K-A_0)'M_0z\) is also linear in \(K\). Its extrema over the admissible set are therefore LP bounds by Theorem \ref{thm:generic_lp_dual}. The norm bound in equation \eqref{eq:multiplier_remainder_bound} follows from the duality between the \(1\)-norm and the infinity norm, together with submultiplicativity:
\[
|q'\mathrm{Rem}_M(K;z)|
\leq
\|q'M_0\|_1\|(\Delta A)'M_0(\Delta A)'M^Kz\|_\infty
\leq
\|q'M_0\|_1\,r\,\delta\,\overline M\,\|z\|_\infty .
\]
Adding and subtracting the resulting nonnegative radius from the first-order LP endpoints gives an interval that contains \(q'M^Kz\) for every admissible \(K\).
\end{proof}

\begin{proof}[Proof of Proposition \ref{prop:shock_cancellation}]
If \(z_{ri}=z_i\), then \(\sum_r \pi^i_{r|s}z_{ri}=z_i\sum_r\pi^i_{r|s}=z_i\), which gives equation \eqref{eq:industry_shock_exposure}. For the multiplier result, consider any vector \(x_{ri}=x_i\) that is constant across regions within each industry. Then
\[
\left[(A^K)'x\right]_{sj}
=
\sum_i B_{ji}\sum_r\pi^i_{r|s}x_i
=
\sum_i B_{ji}x_i,
\]
which is also constant across destination regions within buyer industry \(j\) and does not depend on \(K\). Therefore every term in the Leontief series applied to a pure industry shock is independent of the spatial kernel. If \(z_{ri}=z_r\), then aggregating the supplier-\(i\) exposure over destination regions with weights \(b^i_s\) gives \(\sum_s b^i_s\sum_r\pi^i_{r|s}z_r=\sum_{r,s}K^i_{rs}z_r=\sum_r a^i_rz_r\). Multiplying by \(B_{ji}\) and summing over supplier sectors gives equation \eqref{eq:regional_aggregate_industry}. Thus the kernel cancels only after the stated destination-margin aggregation. Without that aggregation, \(\sum_r\pi^i_{r|s}z_r\) varies with the destination-specific sourcing shares.
\end{proof}

\begin{proof}[Proof of Proposition \ref{prop:max_entropy}]
Maximize \(-\sum_{r,s}K_{rs}\log K_{rs}\) subject to \(\sum_s K_{rs}=a_r\), \(\sum_r K_{rs}=b_s\), and \(K_{rs}\geq 0\). With positive marginals, the entropy objective is strictly concave on the positive orthant. The first-order condition for an interior solution is \(-1-\log K_{rs}+\lambda_r+\mu_s=0\), so \(K_{rs}=A_rB_s\). The marginal restrictions imply \(K_{rs}=a_rb_s\). The mutual information of a coupling with these marginals is \(D_{KL}(K\|ab')\). It equals zero if and only if \(K=ab'\).
\end{proof}

\section{Data and Measures}
\label{app:data_construction}

The empirical application uses public inputs and reproducible tables and figures. The national IO block is the sixteen-sector production block used throughout the application. The baseline origin margins, destination margins, and exposure weights use 2019 QCEW wage-bill shares. The intermediate-demand destination margins are constructed using equation \eqref{eq:intermediate_destination_marginal}. QCEW employment shares provide the first robustness margin, and the model's generated steady-state output measure, \texttt{steady\_X\_is}, is retained only as a model-implied robustness margin. The state-to-state flow data come from the 2017 Commodity Flow Survey. State distances are computed from Census state centroids, and adjacency is derived from county adjacency files aggregated to the state level.

The empirical inputs are constructed in four steps. First, CFS shipment records are mapped to the sixteen-sector model classification, a balanced state-to-state flow grid is formed, origin and intermediate-demand destination proxy marginals are constructed, spatial parameters are estimated where possible, and baseline completions are produced. Second, the identified-set calculations construct admissible sets from exact margins, banded margins, local support restrictions, CFS home-share bands, CFS distance-bin bands, and selected bilateral CFS cell bands, then solve transportation linear programs for one-step exposure bounds, sector decompositions, moment-band intervals, dual shadow prices, and nonlinear multiplier outer bounds. Third, the exposure-accounting simulation combines the completed kernels with the national IO block, sectoral intermediate-input intensities, and state-sector exposure weights, then produces the point-completion multiplier tables and outputs for the Gulf regional and national manufacturing shock designs. Fourth, the paper tables and figures are constructed from the generated CSV outputs.

The strict sector classification is fixed before the results are interpreted. Manufacturing, mining, transportation, and wholesale are the shipment-covered sectors. Agriculture is pooled with tradable sectors. Utilities, information, finance-insurance-real estate, and professional and business services are pooled mixed sectors. Construction, retail, education, health care, leisure, other services, and government are local support-restricted sectors. Retail is also reported under a shipment-informed appendix classification.

The paper's figures, tables, and numerical macros are produced from CSV outputs rather than hand-entered values. This convention makes the numerical statements in the introduction and results sections traceable to source files. The home-bias means come from \path{home_bias_wedge_table.csv}, and the local-sector interval width comes from \path{local_identified_set_width.csv}. The identified-set outputs are \path{moment_band_bounds.csv}, \path{lp_dual_shadow_prices.csv}, and \path{nonlinear_multiplier_outer_bounds.csv}.

The sharp one-step exposure measures come from transportation linear programs that use the national production block, state-sector exposure weights, maintained proxy marginals, and admissible moment restrictions. The multiplier statistics come from the corresponding shock-level calculations, summary table, and matrix calculations.

The LP validation file reports a maximum row marginal error of \LPValidationMaxRowGap{} and a maximum column marginal error of \LPValidationMaxColGap{}. Maximum support leakage is \LPValidationMaxSupportLeak{}, and the number of moment-band violations is \LPValidationBandViolations{}. The dual table reports a maximum primal-dual gap of \DualMaxGap{}. These values show that the reported intervals are computed over feasible kernels that satisfy the maintained marginal, support, and shipment-moment restrictions up to numerical tolerance.

For the CFS-CV row in Table \ref{tab:moment_band_bounds}, the calculation uses the official coefficient of variation for value estimates in the downloadable 2017 CFS tables. For each sector and moment, it computes the numerator value, the denominator value, and conservative component standard errors from \texttt{VAL\_S}. The share half-width is
\[
z\left(\frac{\widehat{se}(N)}{D}+\frac{N\widehat{se}(D)}{D^2}\right),
\]
where \(N\) is the moment numerator and \(D\) is the sector total. The calculation uses \(z=3.13\), which is approximately a Bonferroni value for the 28 CFS moments used across four sectors. The construction is conservative. It ignores covariance terms because the public CFS files do not report the covariance matrix for the model-sector, state-pair, distance-bin shares.

\subsection{Measures}
\label{app:measure_definitions}

For each supplier sector \(i\) and completion \(m\), let \(K^{i,m}_{rs}\) denote the joint state-to-state coupling after balancing. The CFS measures below summarize observed goods shipments. They are auxiliary geography moments, not direct measures of the full intermediate-input coupling. The home share is
\begin{equation}
H_i^m=\sum_r K^{i,m}_{rr}.
\end{equation}
For observed CFS flows \(F^i_{rs}\), the observed home share is
\begin{equation}
H_i^{CFS} =
\frac{\sum_r F^i_{rr}}{\sum_{r,s}F^i_{rs}},
\end{equation}
when total observed flow is positive. The average distance of a completion is
\begin{equation}
D_i^m=\sum_{r,s}K^{i,m}_{rs}d_{rs}.
\end{equation}

Distance-bin shares are computed by summing observed flow or completed coupling mass over bins. For a bin \(B\), the CFS distance-bin share is
\begin{equation}
S_{iB}^{CFS}=
\frac{\sum_{(r,s):d_{rs}\in B}F^i_{rs}}{\sum_{r,s}F^i_{rs}},
\end{equation}
and the corresponding completed share is
\begin{equation}
S_{iB}^{m}=
\sum_{(r,s):d_{rs}\in B}K^{i,m}_{rs}.
\end{equation}
The bins used in the paper are same state, 0--250 miles, 250--500 miles, 500--1000 miles, 1000--1500 miles, and more than 1500 miles.

Held-out flow prediction is evaluated on a fixed deterministic set of origin-destination cells. Within each supplier sector, the observed held-out flows and predicted held-out kernel masses are normalized to sum to one. RMSE is then computed between the two normalized vectors. The scoring rule asks whether the completion predicts the distribution of held-out flows, not whether it predicts aggregate shipment volume.

For generic exposure disagreement, consider a one-state, one-supplier-sector shock to supplier sector \(i\) in origin state \(r\). For completion \(m\), the one-step exposure contribution to destination node \((s,j)\), scaled by the destination demand marginal for supplier \(i\), is
\begin{equation}
E_{s,j}^{m}(r,i)=B_{ji}K^{i,m}_{rs}
=\mu_j\omega_{ji}K^{i,m}_{rs}.
\end{equation}
These measures compare the vector \(E^m(r,i)\) across completions. The measure is an exposure-accounting statistic rather than the conditional sourcing share \(\pi^i_{r|s}\). Rank correlation compares the complete ordering of downstream state-sector exposures. Top-decile overlap compares the set of downstream nodes in the top ten percent of exposure. Maximum absolute disagreement records the largest difference in exposure across downstream nodes.

\section{From Spatial Kernels to Structural Multipliers}
\label{app:structural_closure}

The spatial kernel allocates intermediate-input exposure across supplier and buyer locations. It does not by itself determine the effect of a shock on aggregate output, state output, or welfare. Those quantities require additional closure assumptions. A full state-sector model also requires final-demand accounting assumptions and an equilibrium closure. The examples below show why financing and price responses are separate from the intermediate-input kernel.

A structural spatial model can be written as \((K,\theta,\mathcal C)\), where \(K\) is the spatial kernel, \(\theta\) collects behavioral elasticities and technology parameters, and \(\mathcal C\) is the equilibrium closure. The identified-set analysis restricts \(K\), but leaves \(\theta\) and \(\mathcal C\) to the structural model. A full spatial general equilibrium counterfactual can therefore be evaluated over admissible kernels \(K\in\mathcal A\), rather than over a single proportional regionalization.

\subsection{Final-demand accounting}
\label{app:final_demand_accounting}

The intermediate-input kernel and the final-demand sourcing kernel are distinct quantities. Let \(c_{si}\) denote domestic household final demand in destination state \(s\) for sector \(i\), after PCE categories have been concorded to production sectors, adjusted for imported content, and scaled to the domestic household final-use total for the sector. Let \(h^i_{r|s}\) denote the final-demand sourcing share from producer state \(r\) to household destination state \(s\). Household final demand for sector \(i\) output produced in \(r\) and absorbed by households in \(s\) is \(h^i_{r|s}c_{si}\), with \(\sum_r h^i_{r|s}=1\).

A transparent construction of this block requires assumptions beyond those used for the intermediate-input kernel. First, one must construct state-sector household expenditure shares from state PCE or impose national household expenditure shares where the concordance is weak. Second, imported content must be removed from household absorption, because imports are not domestic producer output. Third, exports, inventories, investment goods, federal purchases, and uncovered final uses should be kept in a residual final-use block unless their destination geography is observed. The residual block can be assigned to producer nodes, but assigning it to destination states is an additional spatial allocation rule.

Given a domestic household origin marginal \(a^{C,i}\) and destination marginal \(b^{C,i}\), the final-demand coupling can be written as
\begin{equation}
\label{eq:app_household_coupling}
K^{C,i}
=
\operatorname{IPF}\left(a^{C,i},b^{C,i};G^{C,i}\right),
\qquad
h^i_{r|s}
=
\frac{K^{C,i}_{rs}}{b^{C,i}_s},
\end{equation}
where \(G^{C,i}\) is a prior or support restriction for household final-demand sourcing. It need not equal the intermediate-input prior \(G^{I,i}\). Local services may require a local-support rule, while tradable goods may be better represented by shipment or gravity information. The same point applies over time. If both \(K^{I,i}\) and \(K^{C,i}\) are allowed to move freely during a shock period, changes in observed output or expenditure can always be rationalized by changes in the unobserved matrices. Empirical content requires fixed kernels or restricted, pre-specified time variation.

\subsection{A final-demand closure}
\label{app:simple_structural_closure}

The closure matters even in a deliberately simple case. Let \(f^R_n\) be a government final-demand impulse to producer node \(n\), let \(\hat F^R=\sum_{n\in\mathcal N}f^R_n\), and let \(\zeta^C_n\) denote the household final-demand basket. With lump-sum tax finance, the direct final-demand row perturbation is
\begin{equation}
\label{eq:app_tax_financed_impulse}
\delta f^R_n
=
f^R_n-\zeta^C_n\hat F^R.
\end{equation}
This perturbation sums to zero across producer nodes when \(\zeta^C\) sums to one. The zero-sum condition applies to nominal final demand. It does not imply that the real aggregate output response is zero, because the response also depends on relative prices and the production network. Let
\[
\hat P^C_w=\sum_{n\in\mathcal N}\zeta^C_n\hat p^w_n
\]
be the household consumption price-index response in wage-numeraire units. With exogenous labor, the aggregate real final-output response is
\begin{equation}
\label{eq:app_kappa_closure}
\hat Y^R
=
\hat F^R-\bar\Omega^C_0\hat P^C_w
=
\kappa^R\hat F^R,
\qquad
\kappa^R_{\mathrm{exog}\ L,\mathrm{tax}}
\equiv
1-\bar\Omega^C_0\frac{\hat P^C_w}{\hat F^R}.
\end{equation}
The shock can still change state-sector gross output through \(\delta f^R_n\) and its upstream propagation through the network. The parameter \(\kappa^R\) is not imposed by the budget constraint. It depends on the price response, and therefore on the production network and closure. A different financing rule, labor-market closure, or final-demand block would give a different mapping from exposure to output.

\subsection{Local comparative statics}
\label{app:local_ge_equations}

The same distinction can be written in the standard first-order production-network notation. Let \(\Omega\) be the matrix of initial expenditure shares for the final-demand node, producer nodes, and factor nodes, and let \(\Psi=(I-\Omega)^{-1}\) be the corresponding Leontief inverse. Let \(\lambda_n\) denote the Domar weight of producer node \(n\), and let hats denote first-order log changes around the initial equilibrium. A full local GE calculation would specify producer productivity shocks \(\hat A_n\), factor supply shocks \(\hat L_g\), final-demand composition shocks \(\hat\omega_{0n}\), residual final-demand level shocks, substitution elasticities, and factor-market closure.

With real prices measured relative to the GDP deflator, producer price changes satisfy
\begin{equation}
\label{eq:app_price_response}
\hat p_n
=
-
\sum_{k\in\mathcal N}\Psi_{nk}\hat A_k
+
\sum_{g\in\mathcal G}\Psi_{ng}
\left(\hat\lambda_g-\hat L_g+\hat Y\right).
\end{equation}
The first term is downstream cost propagation from productivity shocks. The second term is the cost effect of factor-price changes, which depends on the factor-market closure through the factor Domar weights \(\hat\lambda_g\).

Sales shares are determined by backward propagation through demand. For any producer or factor \(i\),
\begin{equation}
\label{eq:app_sales_share_response}
\lambda_i\hat\lambda_i
=
\theta_0
\operatorname{Cov}_{\Omega(0)}
\left(\hat\omega_0,\Psi_{\cdot i}\right)
+
\sum_{\ell\in\mathcal N}
\Delta^R_{0\ell}\Psi_{\ell i}
+
\sum_{j\in\{0\}\cup\mathcal N}
\lambda_j(\theta_j-1)
\operatorname{Cov}_{\Omega(j)}
\left(-\hat p,\Psi_{\cdot i}\right).
\end{equation}
The three terms capture final-demand composition shifts, residual final-demand row perturbations, and substitution responses to relative prices. The quantity response follows from the Domar identity:
\begin{equation}
\label{eq:app_node_quantity_response}
\hat y_n=\hat\lambda_n-\hat p_n+\hat Y.
\end{equation}
These equations show why exposure accounting is not an output or welfare counterfactual. The spatial kernel enters \(\Omega\) and \(\Psi\), but output responses also depend on the shocks, substitution elasticities, factor adjustment, final-demand block, and financing rule. The identified-set results in the main text restrict one input into this system, not the full counterfactual system.

\section{Nonlinear Multiplier Bound}
\label{app:additional_evidence}

Table \ref{tab:nonlinear_multiplier_outer_bounds} reports the nonlinear multiplier outer-bound calculation for the Gulf shock. The first-order LP interval is much narrower than the outer interval because the conservative perturbation radius exceeds one. The calculation is valid as an outer bound, but it is not informative enough to support a sharp nonlinear multiplier claim in this application.

\begin{table}[htbp]
\centering
\caption{Nonlinear Multiplier Outer Bounds}
\label{tab:nonlinear_multiplier_outer_bounds}
\normalsize
\setlength{\tabcolsep}{2pt}
\begin{tabular}{@{}>{\raggedright\arraybackslash}p{0.10\textwidth}>{\centering\arraybackslash}p{0.10\textwidth}>{\centering\arraybackslash}p{0.1475\textwidth}>{\centering\arraybackslash}p{0.075\textwidth}>{\centering\arraybackslash}p{0.08\textwidth}>{\centering\arraybackslash}p{0.115\textwidth}>{\centering\arraybackslash}p{0.1475\textwidth}>{\centering\arraybackslash}p{0.10\textwidth}@{}}
\toprule
Target & Baseline & \shortstack{First\\order} & Radius & \shortstack{Radius\\$<1$} & Remainder & \shortstack{Outer\\interval} & \shortstack{Contains\\points} \\
\midrule
Gulf Aggregate & 0.0553 & [-0.0018, 0.1602] & 2.947 & No & 17.958 & [-17.960, 18.119] & Yes \\
\bottomrule
\end{tabular}
\begin{flushleft}
\footnotesize Notes: The table reports the perturbation outer-bound calculation in Proposition \ref{prop:certified_multiplier_bounds}. First order is the LP interval for \(q'M_0(A^K-A_0)'M_0z\) added to the baseline multiplier value. Radius is the conservative bound on \(\|(A^K-A_0)'M_0\|_\infty\). The outer interval adds the computed remainder bound. Contains points indicates whether the outer interval contains the multiplier values from the existing point completions.
\end{flushleft}
\end{table}

\section{CFS Moment Evidence and LP Duals}
\label{app:cfs_moment_evidence}

Table \ref{tab:home_bias_wedge} reports the sector-level home-share and held-out flow-fit comparisons behind Figure \ref{fig:home_bias_wedge}. The table uses the CFS interpretation in Section~\ref{sec:implementation}.

\begin{table}[htbp]
\centering
\caption{Home Bias and Held-Out Fit}
\label{tab:home_bias_wedge}
\normalsize
\begin{tabular}{lccccc}
\toprule
Supplier sector & CFS home & CI home & Gravity home & CI RMSE & Gravity RMSE \\
\midrule
Manufacturing & 0.371 & 0.026 & 0.284 & 0.0094 & 0.0055 \\
Mining & 0.567 & 0.032 & 0.323 & 0.0125 & 0.0098 \\
Transportation & 0.480 & 0.025 & 0.415 & 0.0097 & 0.0048 \\
Wholesale & 0.606 & 0.028 & 0.515 & 0.0165 & 0.0043 \\
\midrule
Mean & 0.506 & 0.028 & 0.384 & 0.0120 & 0.0061 \\
\bottomrule
\end{tabular}
\begin{flushleft}
\footnotesize Notes: CFS home is the within-state shipment share in the observed CFS origin-destination matrix for each strict shipment-covered sector. CI home is the home share under conditional independence. Gravity home is the home share under the structured spatial completion. RMSE columns report held-out normalized flow RMSE.
\end{flushleft}
\end{table}

Figure \ref{fig:distance_decay} compares observed CFS flows with the two main completions by distance bin. The proportional completion is diffuse because it uses only the marginals. Observed flows are much more concentrated in same-state and nearby bins. The structured completion moves the kernel toward this pattern.

\begin{figure}[htbp]
\centering
\includegraphics[width=0.95\textwidth]{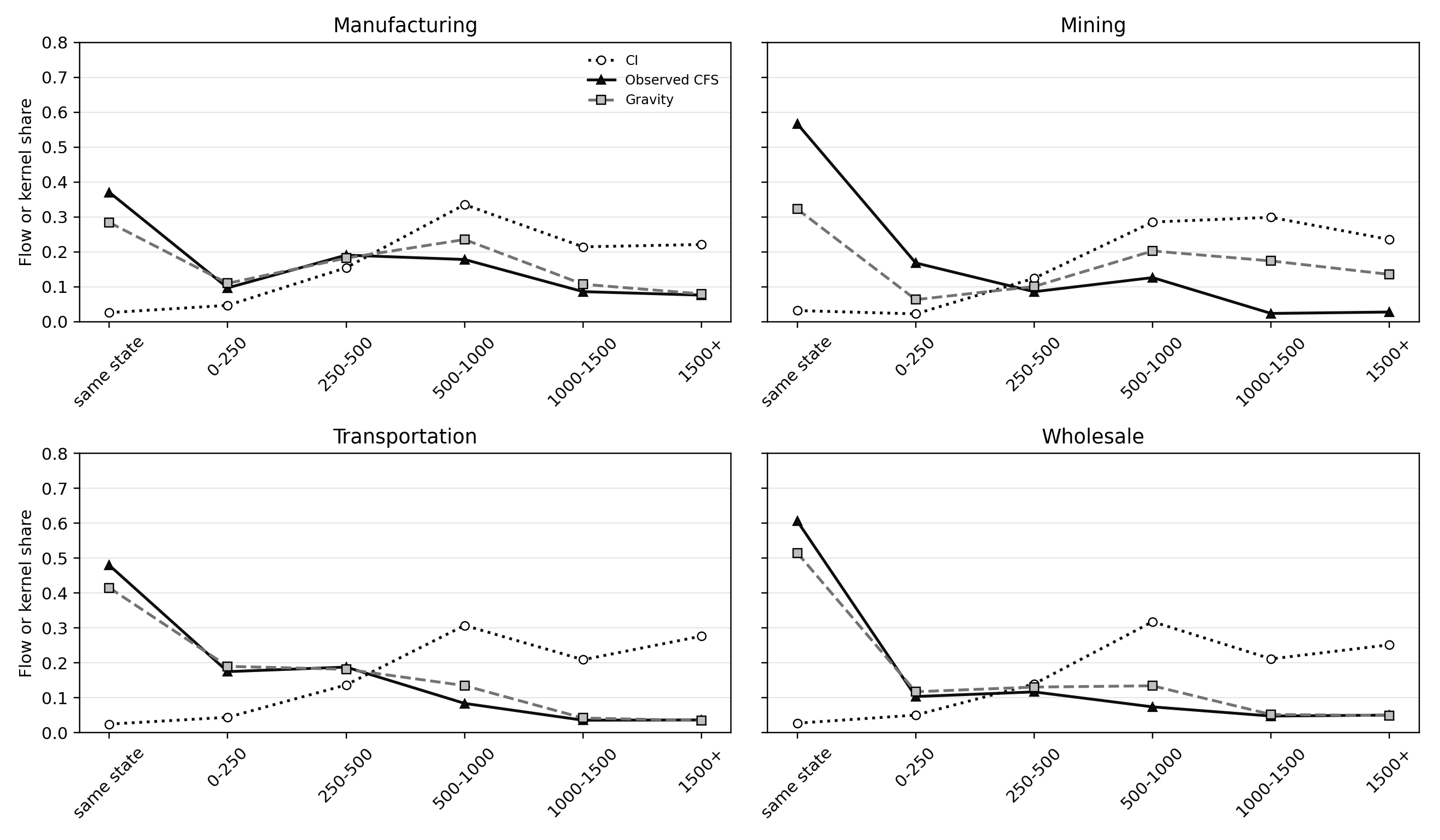}
\caption{Distance Decay in Observed Flows and Spatial Completions}
\label{fig:distance_decay}
\begin{flushleft}
\footnotesize Notes: The figure reports the share of flow or kernel mass in distance bins for the four strict shipment-covered sectors. The bins are same state, 0--250 miles, 250--500 miles, 500--1000 miles, 1000--1500 miles, and more than 1500 miles.
\end{flushleft}
\end{figure}

\begin{table}[htbp]
\centering
\caption{CFS Moment Feasibility Tolerances}
\label{tab:moment_tolerances}
\normalsize
\begin{tabular}{lcc}
\toprule
Supplier sector & Home-share $\tau$ & Distance-bin $\tau$ \\
\midrule
Manufacturing & $<10^{-6}$ & $<10^{-6}$ \\
Mining & 0.1414 & 0.1414 \\
Transportation & $<10^{-6}$ & $<10^{-6}$ \\
Wholesale & $<10^{-6}$ & $<10^{-6}$ \\
\bottomrule
\end{tabular}
\begin{flushleft}
\footnotesize Notes: The table reports the smallest uniform tolerance needed to make each CFS moment set feasible under the maintained wage-bill origin and intermediate-demand destination proxy marginals. Distance-bin tolerances apply jointly to the same-state and distance-bin shares. Mining requires the widest tolerance because the observed CFS home share is outside the range implied by the maintained marginals.
\end{flushleft}
\end{table}

\begin{table}[htbp]
\centering
\caption{Moment-Band Sensitivity for Gulf Exposure Bounds}
\label{tab:moment_band_bounds}
\normalsize
\setlength{\tabcolsep}{2.5pt}
\begin{tabular}{@{}>{\raggedright\arraybackslash}p{0.23\textwidth}>{\centering\arraybackslash}p{0.08\textwidth}>{\centering\arraybackslash}p{0.12\textwidth}>{\centering\arraybackslash}p{0.12\textwidth}>{\centering\arraybackslash}p{0.11\textwidth}>{\centering\arraybackslash}p{0.10\textwidth}>{\centering\arraybackslash}p{0.10\textwidth}@{}}
\toprule
Moment band & \shortstack{Formal\\CI} & \shortstack{Mean\\home\\$\tau$} & \shortstack{Mean\\distance\\$\tau$} & \shortstack{Median\\width} & \shortstack{P90\\width} & \shortstack{Possible\\top} \\
\midrule
Minimum feasibility band & No & 0.035 & 0.035 & 0.00309 & 0.00419 & 51 \\
CFS CV conservative band & No & 0.495 & 0.495 & 0.00312 & 0.00421 & 51 \\
Conservative robust band, 10 pp floor & No & 0.110 & 0.110 & 0.00309 & 0.00419 & 51 \\
\bottomrule
\end{tabular}
\begin{flushleft}
\footnotesize Notes: The table reports sharp one-step Gulf regional-shock exposure intervals under alternative bands around CFS home-share and distance-bin moments. Formal CI indicates whether the row is a formal confidence interval. It is ``No'' because the public CFS files do not provide the full covariance matrix for the constructed share vector. The bands describe uncertainty about auxiliary shipment moments, not sampling uncertainty for the full intermediate-input coupling.
\end{flushleft}
\end{table}

\begin{table}[htbp]
\centering
\caption{Largest LP Moment Shadows}
\label{tab:dual_shadow_prices}
\normalsize
\setlength{\tabcolsep}{2pt}
\begin{tabular}{@{}>{\raggedright\arraybackslash}p{0.08\textwidth}>{\raggedright\arraybackslash}p{0.17\textwidth}>{\raggedright\arraybackslash}p{0.08\textwidth}>{\centering\arraybackslash}p{0.12\textwidth}>{\centering\arraybackslash}p{0.10\textwidth}>{\raggedright\arraybackslash}p{0.24\textwidth}>{\centering\arraybackslash}p{0.10\textwidth}@{}}
\toprule
Target & Supplier & Bound & \shortstack{Sector LP\\endpoint} & \shortstack{Binding\\moments} & Top moment & Shadow \\
\midrule
CA & Transportation & Upper & 3.6e-04 & 6 & Dist. 1500+, upper & 0.0180 \\
TX & Mining & Upper & $<10^{-8}$ & 4 & Dist. same-state, lower & 0.0027 \\
LA & Mining & Lower & 6.2e-05 & 4 & Dist. same-state, lower & -0.0027 \\
CA & Mining & Upper & 5.3e-05 & 4 & Dist. same-state, lower & 0.0027 \\
TN & Mining & Upper & 4.8e-05 & 4 & Dist. same-state, lower & 0.0027 \\
IL & Mining & Upper & 5.3e-05 & 4 & Dist. same-state, lower & 0.0027 \\
\bottomrule
\end{tabular}
\begin{flushleft}
\footnotesize Notes: Rows are sector-specific LP dual calculations for listed target-sector-bound subproblems inside the state-level Gulf exposure bounds. They are not whole-state exposure endpoints or unique structural shadow prices. Binding moments counts nonzero moment multipliers. Top moment is the largest absolute moment multiplier in the subproblem. The maximum primal-dual gap in the full dual output is \DualMaxGap{}.
\end{flushleft}
\end{table}

The strongest binding bilateral cell is the \BindingTopCellSector{}
\BindingTopCellOD{} cell, which accounts for \BindingTopCellTargetPct{}
percent of sector CFS value and has a maximum absolute shadow price of
\BindingTopCellMaxShadow{} in these endpoint problems.

\begin{table}[htbp]
\centering
\caption{Binding Bilateral CFS Cell Restrictions}
\label{tab:binding_bilateral_cfs_cells}
\normalsize
\resizebox{\textwidth}{!}{\begin{tabular}{lccccc}
\toprule
Supplier sector & O-D cell & Target (\%) & Half-width & Feas. $\tau$ & Max shadow \\
\midrule
Manufacturing & LA-LA & 1.7 & 0.0054 & 0.0053 & 0.0691 \\
Wholesale & TX-TX & 10.8 & 0.0146 & 0.0145 & 0.0271 \\
Transportation & TX-TX & 7.7 & 0.0001 & 0 & 0.0160 \\
Transportation & FL-FL & 5.2 & 0.0001 & 0 & 0.0160 \\
Transportation & IL-IL & 2.4 & 0.0001 & 0 & 0.0160 \\
Transportation & PA-PA & 2.0 & 0.0001 & 0 & 0.0160 \\
Transportation & OH-OH & 1.7 & 0.0001 & 0 & 0.0160 \\
Transportation & GA-GA & 1.2 & 0.0001 & 0 & 0.0160 \\
Transportation & MO-MO & 0.9 & 0.0001 & 0 & 0.0160 \\
Wholesale & CA-CA & 9.4 & 0.0146 & 0.0145 & 0 \\
Transportation & CA-CA & 8.5 & 0.0001 & 0 & 0 \\
Manufacturing & CA-CA & 5.5 & 0.0054 & 0.0053 & 0 \\
\bottomrule
\end{tabular}}
\begin{flushleft}
\footnotesize Notes: The table reports selected bilateral CFS cell bands
that bind or have a nonzero dual shadow price in endpoint LPs for the Gulf
aggregate and the 20 states with the largest final upper endpoints. Each
listed cell binds in \BindingCellEndpointCount{} lower or upper endpoint LPs
across those targets. These cells are auxiliary shipment restrictions. They
are not observed intermediate-input buyer-seller links.
\end{flushleft}
\end{table}

\section{Robustness of Identified Exposure Bounds}
\label{app:identified_set_robustness}

\subsection{Mining moments}

Table \ref{tab:mining_robustness} asks whether the main state-level exposure bounds are driven by the mining moment, which requires the widest feasibility tolerance among the shipment-covered sectors. The comparison leaves the regional-incidence result essentially unchanged.

\begin{table}[htbp]
\centering
\caption{Mining Moment Robustness for Gulf Exposure Bounds}
\label{tab:mining_robustness}
\normalsize
\setlength{\tabcolsep}{2.5pt}
\begin{tabular}{@{}>{\raggedright\arraybackslash}p{0.28\textwidth}>{\centering\arraybackslash}p{0.11\textwidth}>{\centering\arraybackslash}p{0.12\textwidth}>{\centering\arraybackslash}p{0.10\textwidth}>{\centering\arraybackslash}p{0.10\textwidth}>{\centering\arraybackslash}p{0.10\textwidth}@{}}
\toprule
Maintained CFS moments & \shortstack{Agg.\\width} & \shortstack{Median\\width} & \shortstack{P90\\width} & \shortstack{Robust\\top} & \shortstack{Possible\\top} \\
\midrule
CFS moments for all shipment sectors & 0.00000 & 0.00309 & 0.00419 & 0 & 51 \\
Mining CFS moments excluded & 0.00000 & 0.00312 & 0.00421 & 0 & 51 \\
\bottomrule
\end{tabular}
\begin{flushleft}
\footnotesize Notes: The table reports sharp one-step exposure bounds for the Gulf regional shock under the final CFS moment set and under a variant that excludes mining CFS moments. The comparison asks whether the wide mining feasibility tolerance drives the state-level exposure intervals.
\end{flushleft}
\end{table}

\clearpage

\subsection{Local sectors and retail}

For local sectors, we impose support restrictions rather than estimating a full bilateral coupling from shipment data. The appendix reports the feasible range of within-state home shares. The baseline support includes same-state pairs, county-adjacent states, and the minimum additional nearest-state radius needed for the origin and destination marginals to be feasible. Figure \ref{fig:local_bounds} reports the resulting intervals. The mean interval width is \LocalMeanIntervalWidth.

\begin{figure}[htbp]
\centering
\includegraphics[width=0.85\textwidth]{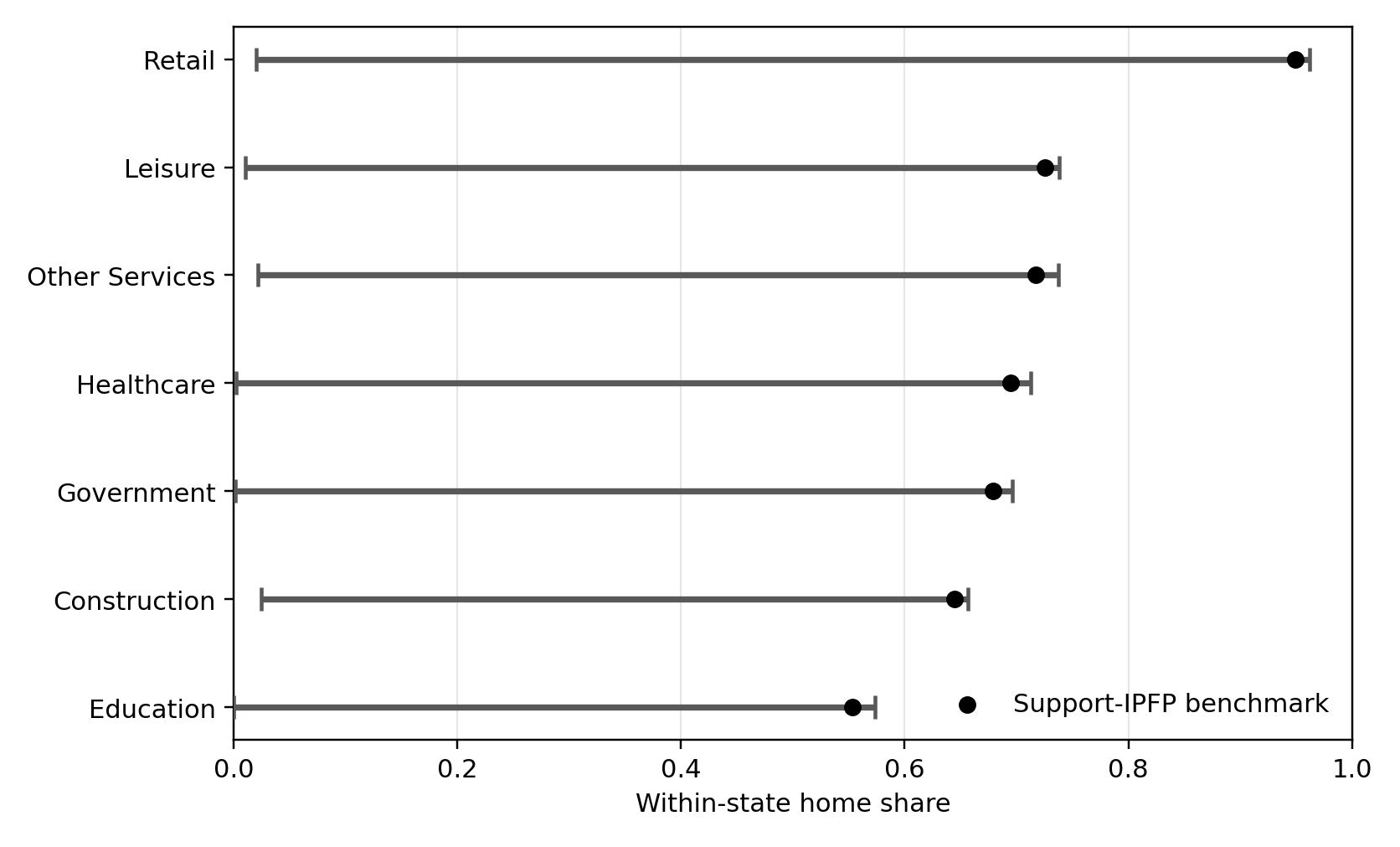}
\caption{Identified Sets for Local-Sector Home Shares}
\label{fig:local_bounds}
\begin{flushleft}
\footnotesize Notes: Horizontal lines are feasible intervals for within-state home shares. Dots are distance-weighted support-IPFP completions.
\end{flushleft}
\end{figure}

\clearpage

Table \ref{tab:retail_robustness} shows how results change if retail is treated as shipment-informed. Treating retail as shipment-informed improves held-out shipment fit, but it changes the target toward goods movement rather than local retail service provision. The main specification therefore keeps retail in the local support-restricted group.

\begin{table}[htbp]
\centering
\caption{Retail Classification Robustness}
\label{tab:retail_robustness}
\normalsize
\begin{tabular}{lccc}
\toprule
Retail treatment & Home share & Average distance & Held-out RMSE \\
\midrule
Observed CFS & 0.309 & 696.8 &  \\
CI & 0.027 & 1141.7 & 0.0063 \\
Main local-support & 0.950 & 16.7 & 0.0122 \\
Appendix shipment-informed & 0.231 & 736.3 & 0.0053 \\
\bottomrule
\end{tabular}
\begin{flushleft}
\footnotesize Notes: Observed CFS summarizes retail-related shipments. CI is the proportional completion. Main local-support is the support-restricted retail treatment used in the baseline specification. Appendix shipment-informed uses the retail CFS flows as auxiliary goods-movement evidence. The comparison concerns sector classification and does not imply that retail shipments observe local retail service provision.
\end{flushleft}
\end{table}

\subsection{Margins and state rankings}

Table \ref{tab:alternative_margin_identified_sets} reports the identified-set calculation under wage-bill, employment, mixed, and model-output marginals. It supports the main-text result that the exact-margin aggregate cancellation depends on the maintained aggregate target, while state-level incidence remains weakly identified across margin choices.

\begin{table}[htbp]
\centering
\caption{Alternative Marginals and Gulf Exposure Bounds}
\label{tab:alternative_margin_identified_sets}
\normalsize
\resizebox{\textwidth}{!}{\begin{tabular}{lccccc}
\toprule
Margin construction & Agg. width & Median width & P90 width & Possible top & Max $\tau$ \\
\midrule
Wage-bill origin, wage-bill destination & 0.00000 & 0.00309 & 0.00419 & 51 & 0.141 \\
Employment origin, employment destination & 0.00383 & 0.00349 & 0.00556 & 51 & 0.107 \\
Wage-bill origin, employment destination & 0.00320 & 0.00326 & 0.00467 & 51 & 0.160 \\
Employment origin, wage-bill destination & 0.00000 & 0.00368 & 0.00504 & 51 & 0.101 \\
Model-output origin, model-output destination & 0.02119 & 0.00346 & 0.00805 & 51 & 0.154 \\
\bottomrule
\end{tabular}}
\begin{flushleft}
\footnotesize Notes: The table repeats the sharp Gulf exposure bounds under alternative origin and destination marginals. Wage-bill origin and wage-bill destination is the baseline. The model-output row is a model-implied robustness margin, not the baseline. State-level top-decile exposure remains weakly identified in every row.
\end{flushleft}
\end{table}

\clearpage

\subsection{Manufacturing split robustness}

Table~\ref{tab:manufacturing_split_feasibility} reports the coverage measures for
the five-part manufacturing split. Each proposed manufacturing group has
positive QCEW wage-bill and employment coverage in all 51 states. Detailed
CFS manufacturing rows mapped to the five proposed groups account for
\ManufacturingSplitDetailedCfsCoveragePct{} percent of aggregate
manufacturing CFS value. Each group has at least
\ManufacturingSplitMinCfsPairs{} positive state-pair cells, and the largest
top-15-cell share is \ManufacturingSplitMaxTopFifteenCfsPct{} percent.

The twenty-sector robustness specification reconstructs the national IO block,
QCEW wage-bill and employment margins, CFS moments, selected bilateral CFS
cell bands, and the Gulf exposure LPs. Table
\ref{tab:manufacturing_split_identified_sets} reports the resulting
identified-set comparison. Under exact wage-bill margins and bilateral CFS
bands, the twenty-sector median state width is
\ManufacturingSplitMedianWidth{} and the p90 width is
\ManufacturingSplitPninetyWidth{}. Sharp pairwise dominance determines
\ManufacturingSplitPairwiseDeterminedPct{} percent of state-pair rankings.
Under banded margins, aggregate width is
\ManufacturingSplitBandedAggregateWidth{} and median state width is
\ManufacturingSplitBandedMedianWidth{}.

The split is feasible, but it does not justify replacing the sixteen-sector
baseline. Finer manufacturing detail does not necessarily sharpen local
incidence because the remaining uncertainty is not concentrated only in
manufacturing. In the twenty-sector specification, split manufacturing subsectors
account for \ManufacturingSplitManufacturingWidthSharePct{} percent of total
state interval width. The largest source is \ManufacturingSplitTopWidthGroup{},
which accounts for \ManufacturingSplitTopWidthGroupPct{} percent. The result
is the same as in the main specification: additional goods-sector
detail helps only if it addresses the sectors and moments that bind the local
exposure intervals.

\begin{table}[htbp]
\centering
\caption{Manufacturing Split Feasibility}
\label{tab:manufacturing_split_feasibility}
\normalsize
\resizebox{\textwidth}{!}{\begin{tabular}{lcccccccc}
\toprule
Manufacturing group & BEA output (\%) & BEA intermediate (\%) & QCEW wage bill (\%) & Detailed CFS (\%) & CFS pairs & Top 15 CFS cells (\%) & Activity coverage & CFS coverage \\
\midrule
Food, textile, wood, paper, printing & 22.9 & 25.8 & 19.7 & 26.1 & 2051 & 29.3 & Yes & Yes \\
Petroleum, chemicals, plastics & 27.8 & 28.1 & 15.4 & 29.0 & 1790 & 44.7 & Yes & Yes \\
Nonmetallic minerals, metals, machinery & 19.1 & 19.1 & 24.6 & 19.1 & 1936 & 27.1 & Yes & Yes \\
Computer, electrical, transport equipment & 26.0 & 23.7 & 33.6 & 21.9 & 1702 & 31.0 & Yes & Yes \\
Furniture and miscellaneous manufacturing & 4.1 & 3.3 & 6.7 & 3.9 & 1577 & 24.4 & Yes & Yes \\
\bottomrule
\end{tabular}}
\begin{flushleft}
\footnotesize Notes: Activity coverage requires positive QCEW wage-bill and
employment activity in all 51 states. Detailed CFS is the group's share of
detailed manufacturing CFS value after mapping three-digit manufacturing
records to the proposed groups. CFS pairs is the number of positive state-pair cells.
Top 15 CFS cells reports the shipment-value share accounted for by the 15
largest positive state-pair cells in the group. CFS coverage requires at least
15 positive state-pair cells.
\end{flushleft}
\end{table}

\begin{table}[htbp]
\centering
\caption{Manufacturing Split Identified-Set Robustness}
\label{tab:manufacturing_split_identified_sets}
\normalsize
\resizebox{\textwidth}{!}{\begin{tabular}{lcccccc}
\toprule
Admissible set & Agg. width & Median width & P90 width & Pairwise det. (\%) & Robust top & Possible top \\
\midrule
Margins only & 0.00000 & 0.00462 & 0.00676 &  & 0 & 51 \\
Local support & 0.00000 & 0.00433 & 0.00618 &  & 0 & 51 \\
CFS home band & 0.00000 & 0.00431 & 0.00590 &  & 0 & 51 \\
CFS distance-bin band & 0.00000 & 0.00431 & 0.00590 &  & 0 & 51 \\
CFS bilateral cell bands & 0.00000 & 0.00431 & 0.00590 & 0.0 & 0 & 51 \\
Banded margins + CFS bilateral & 0.00227 & 0.00536 & 0.00789 &  & 0 & 51 \\
\bottomrule
\end{tabular}}
\begin{flushleft}
\footnotesize Notes: The table repeats the Gulf exposure bound calculation
after splitting aggregate manufacturing into five manufacturing subsectors.
Pairwise dominance is recomputed only for the exact-margin bilateral-CFS
row. The sixteen-sector specification remains the main specification because
the split does not sharpen local-incidence bounds and because the paper's
point-completion comparisons use the sixteen-sector production block.
\end{flushleft}
\end{table}

\end{document}